\documentclass{article}

\PassOptionsToPackage{numbers}{natbib}
\usepackage[preprint]{neurips_2026}

\usepackage[utf8]{inputenc} %
\usepackage[T1]{fontenc}    %
\usepackage{hyperref}       %
\usepackage{url}            %
\usepackage{graphicx}       %
\usepackage{wrapfig}        %
\usepackage{enumitem}       %
\usepackage{nameref}        %
\usepackage{booktabs}       %
\usepackage[hypcap=false]{caption}        %
\usepackage{amsmath,amssymb,amsthm}  %
\theoremstyle{plain}
\newtheorem{proposition}{Proposition}

\theoremstyle{definition}

\usepackage{nicefrac}       %
\usepackage{microtype}      %
\usepackage{xcolor}         %
\usepackage{empheq}         %
\definecolor{lightteal}{RGB}{220,245,245}
\newcommand*{\eqbox}[1]{\colorbox{lightteal}{#1}}

\title{PACE: Geometry-Aware Bridge Transport for Single-Cell Trajectory Inference}

\author{Chenglei Yu$^{1,2}$\thanks{Equal contribution. \quad \textdagger Corresponding author.}, Chuanrui Wang$^{2}$\footnotemark[1], Bangyan Liao$^{1,2}$ \& Tailin Wu$^{2}\textsuperscript{\textdagger}$  \\
$^{1}$Zhejiang University\\
$^{2}$Department of Artificial Intelligence, School of Engineering, Westlake University\\
\texttt{\{yuchenglei, wangchuanrui, wutailin\}@westlake.edu.cn} \\
}

\begin{document}

\newcommand{\model}{PACE}

\maketitle

\begin{abstract}
Single-cell trajectory inference from destructive time-course snapshots is fundamentally ill-posed: neither cross-time cell correspondences nor the continuous paths between snapshots are observed, so the observed snapshot distributions alone do not uniquely determine the underlying dynamics. 
Existing optimal transport and flow-based methods typically couple cells by Euclidean proximity at observed clock times, which can misalign trajectories when development is asynchronous and cells sampled at the same experimental time occupy different latent pseudotime stages. 
We propose PACE, a trajectory inference framework that selects geometry-consistent continuous transport dynamics from destructive time-course snapshots through three coupled components.
First, PACE constructs a state- and time-dependent anisotropic Riemannian metric that preserves low cost along locally supported tangent directions while penalizing normal velocity components.
Second, it alternates between refining cross-time couplings under the induced path-action cost and fitting endpoint-preserving neural bridges between adjacent snapshots.
Third, it distills the learned bridge dynamics into a global continuous-time velocity field over cellular states.
Across seven controlled and biological datasets covering nine held-out reconstruction experiments, PACE achieves the strongest overall reconstruction performance, reducing MMD, $\mathcal{W}_1$, and $\mathcal{W}_2$ by 23.7\% on average relative to the strongest competing baseline.
PACE also improves RNA-velocity alignment by 15.4\% on an embryoid body differentiation benchmark, without requiring explicit cell pairing, lineage tracing, or RNA velocity supervision during training.
Code is available at \url{https://github.com/AI4Science-WestlakeU/PACE}.
\end{abstract}

\section{Introduction}
\label{sec:intro}

Understanding how cells move from one state to another is a central problem in single-cell biology~\citep{farrell2018single,wagner2018single,schiebinger2019optimal,tanay2017scaling,griffiths2018using}. Processes such as development, differentiation, immune activation, tumor evolution, and cellular reprogramming are not simply collections of discrete cell types, but continuous population-level transitions through high-dimensional molecular state space~\citep{trapnell2014dynamics,haghverdi2016diffusion,setty2019characterization,cacchiarelli2018aligning}. The key questions are therefore dynamical, including which early states commit to particular fates, which intermediate states are transient but decisive, and which regulatory programs drive these transitions. Trajectory inference~\citep{lavenant2021towards,hashimoto2016learning} aims to answer these questions by reconstructing continuous cell-state evolution from time-course single-cell observations.

Transport- and flow-based trajectory methods provide a natural framework for modeling such dynamics~\citep{schiebinger2019optimal, peyre2019computational,lavenant2021towards}. Since time-course single-cell data provide population-level snapshots rather than paired observations of the same cells~\citep{schiebinger2019optimal, weinreb2018fundamental}, these methods must infer how marginal distributions are coupled across time. Most existing formulations operate in the Euclidean representation space, where couplings are estimated from Euclidean endpoint costs and continuous paths or velocity fields are regularized by Euclidean kinetic energy, action, or smoothness~\citep{neklyudov2023action,tong2023conditional,tong2020trajectorynet,huguet2022manifold}. This Euclidean geometry is convenient, but it is not necessarily aligned with biological progress. Two cells can be close in expression space while lying at different developmental stages or fate branches, whereas biologically plausible motion may follow curved and locally anisotropic directions along the cell-state manifold~\citep{moon2019visualizing,wolf2019paga}.

This mismatch makes trajectory inference from destructive snapshots fundamentally ambiguous~\citep{weinreb2018fundamental,tritschler2019concepts}: the cross-time coupling is not observed and is generally not identifiable from marginal distributions alone. A coupling may satisfy the marginal constraints while connecting cells across incompatible developmental programs, stalled states, or fate branches. For example, in mouse reprogramming~\citep{schiebinger2019optimal}, differences in growth rates between cell types can make marginal matching misleading, coupling apoptotic stromal cells to rapidly expanding iPSCs. Human iPSC reprogramming~\citep{liu2020reprogramming} shows the same issue at finer granularity, that cells collected on the same day can contain heterogeneous primed-like, naive-like, and trophectoderm-like intermediates, so clock-time adjacency alone does not determine which cells should be coupled. 
Auxiliary measurements such as RNA velocity, lineage tracing, or metabolic labeling~\citep{la2018rna,bergen2020generalizing,lange2022cellrank} can help, but require additional experiments and are unavailable in many datasets. Existing Euclidean or support-based regularizers can encourage short, smooth, or data-supported paths~\citep{saelens2019comparison}, but they do not directly encode which local directions of motion are developmentally admissible.

We propose PACE, a geometry-aware trajectory-inference framework for selecting snapshot-consistent continuous transport dynamics from unpaired time-course snapshots.
The intuition is that, although true cell identities and intermediate paths are unobserved, local spatiotemporal neighborhoods still provide a weak but useful prior on admissible directions, since motion along locally supported tangent directions is more plausible than motion orthogonal to the observed manifold structure.
PACE therefore replaces Euclidean transport costs with a Riemannian path action induced by a state- and time-dependent anisotropic metric, which assigns lower cost to locally supported tangent motion and higher cost to normal motion.
Specifically, PACE estimates local tangent \emph{directions} from spatiotemporal neighborhoods of observed cells and uses this metric to define an anisotropic bridge transport problem, where the cost of coupling two cells is the minimum Riemannian action of an endpoint-conditioned path.
PACE then alternates between refining OT couplings under this path-action cost and fitting endpoint-preserving neural bridges, allowing local geometry to influence both which cells are coupled and how they move between snapshots.
Finally, the learned bridge dynamics are distilled into a continuous-time population velocity field over cellular states.

Our contributions are threefold:
\begin{enumerate}
    \item We introduce a time- and state-dependent Riemannian metric based on local spatiotemporal tangent-space projections, providing a geometry-aware path-action cost for selecting couplings between adjacent destructive snapshots beyond Euclidean endpoint proximity.

    \item We develop an iterative bridge-transport procedure that alternates between Riemannian coupling refinement and endpoint-preserving neural bridge fitting, allowing cross-time correspondences and interpolant paths to be selected jointly under the same geometry-aware action.

\item 
We distill endpoint-conditioned bridges into a global continuous-time population velocity field and evaluate PACE on seven datasets covering nine held-out reconstruction experiments. 
PACE achieves the strongest overall reconstruction performance, reducing MMD, $\mathcal{W}_1$, and $\mathcal{W}_2$ by 23.7\% on average relative to the strongest competing baseline, improves RNA-velocity alignment on an embryoid body differentiation benchmark, and shows consistent gains across component ablations.
\end{enumerate}

\section{Related Work}
\label{sec:related}

\paragraph{Optimal transport and flow-based trajectory inference.}
Optimal transport provides a natural framework for coupling unpaired snapshot distributions in single-cell analysis \citep{schiebinger2019optimal,lavenant2021towards}. Recent neural extensions such as TrajectoryNet \citep{tong2020trajectorynet}, Wasserstein Lagrangian Flows \citep{neklyudov2023action}, and Conditional Flow Matching (CFM) \citep{tong2023conditional, lipman2023flow} learn continuous dynamics by regressing vector fields along interpolants between coupled endpoints. Schrodinger bridge methods \citep{debortoli2021diffusion,shi2023diffusion,chen2023deep} add stochasticity or entropic regularization to the transport problem, while Curly Flow Matching \citep{petrovic2026curly} extends the framework to non-gradient dynamics using approximate velocity information. These methods typically infer couplings from Euclidean proximity or OT distances in the observed space, which can misalign trajectories when cells at the same experimental time occupy different pseudotime stages.

\paragraph{Geometry-aware generative models.}
The manifold hypothesis has motivated data-dependent Riemannian metrics in ambient spaces~\citep{hauberg2012geometric,arvanitidis2017latent}, flow matching on known manifolds or general geometries~\citep{chen2024flow}, and manifold-aware OT flows for single-cell trajectories~\citep{huguet2022manifold}.
Metric Flow Matching (MFM)~\citep{kapusniak2024metric} is closest to our setting: it uses task-independent \emph{support-aware} metrics such as LAND~\citep{arvanitidis2017latent} and RBF to pull geodesics toward the data support.
PACE differs in both the source and use of geometry.
In asynchronous reprogramming, density support alone does not determine plausible cross-time transitions, since a high-density intermediate may connect to either progressed or refractory fates and Euclidean proximity can couple incompatible programs.
PACE instead builds a time- and state-dependent, \emph{direction-aware} metric from local spatiotemporal tangent subspaces estimated from destructive snapshots, penalizing motion orthogonal to plausible developmental directions.
The resulting metric action is used both to learn interpolant paths and to refine cross-time couplings, whereas MFM typically assumes the endpoint pairing is fixed before interpolant learning.

\section{Method}
\label{sec:method}

\begin{figure*}[t]
    \centering
    \includegraphics[width=\textwidth]{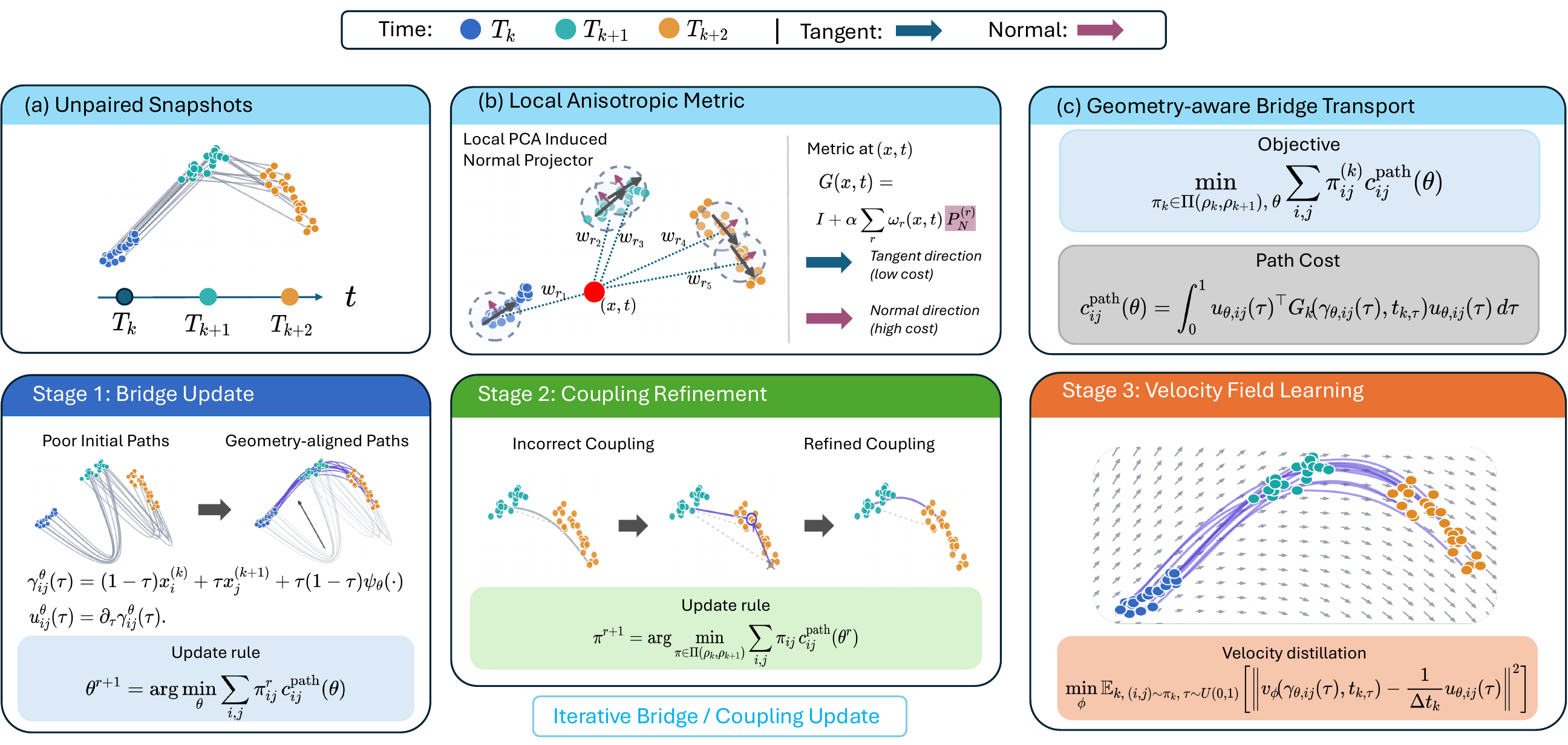}
    \caption{
    \textbf{Overview of PACE.}
    PACE uses local PCA to construct an anisotropic metric \(G_k(x,t)=I+\alpha C_N^{(k)}(x,t)\), trains endpoint-preserving neural bridges under the corresponding path-action cost, iteratively refines cross-time couplings, and distills the learned bridge dynamics into a global velocity field for trajectory inference from unpaired snapshots.
    }
    \label{fig:pace_overview}
\end{figure*}

\subsection{Problem formulation}
\label{sec:problem-formulation}

We observe single-cell point clouds at anchor times
\(\{t_k\}_{k=0}^K\), with \(t_0=0\),
\[
\mathcal{X}^{(k)}
=
\{x_i^{(k)}\}_{i=1}^{N_k}
\sim \hat\rho_k
\subset \mathbb{R}^d,
\qquad
k=0,1,\dots,K.
\]
Here \(d\) denotes the representation dimension.
Our goal is to infer trajectories and a velocity field transporting \(\hat\rho_k\) to \(\hat\rho_{k+1}\) between adjacent times.
Once learned, the velocity field can be integrated from \(x_{t_0}\sim\hat\rho_0\) to generate trajectories over the time course.

\begin{proposition}[Ill-posedness from snapshots alone]
\label{prop:illposed}
Given only the point clouds \(\{\mathcal{X}^{(k)}\}_{k=0}^K\), the reconstruction of cross-time couplings and intermediate trajectories is non-unique.
\end{proposition}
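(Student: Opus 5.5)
The plan is to exhibit, for generic snapshot data, two distinct couplings and two distinct families of intermediate trajectories that reproduce exactly the same observed point clouds. Since the data enter only through the empirical marginals \(\hat\rho_k\), any two models with identical marginals at every anchor time cannot be distinguished. I will make ``reconstruction'' precise as a pair consisting of couplings \(\pi_k\in\Pi(\hat\rho_k,\hat\rho_{k+1})\) and a path measure \(P\) on \(C([t_0,t_K];\mathbb{R}^d)\) with time-\(t_k\) marginals \(\hat\rho_k\). Non-uniqueness then means that this set has more than one element. I will treat the degenerate case (some \(\hat\rho_k\) a single Dirac mass with \(N_k=1\)) separately, or exclude it, since there the coupling is forced.

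\textbf{Step 1 (couplings).} Fix \(k\) with \(N_k,N_{k+1}\ge 2\), and pick indices \(i\neq i'\) and \(j\neq j'\) such that \(x_i^{(k)}\neq x_{i'}^{(k)}\) and \(x_j^{(k+1)}\neq x_{j'}^{(k+1)}\). Take \(\pi\) to be the independent coupling \(\hat\rho_k\otimes\hat\rho_{k+1}\). Perturb it by \(\varepsilon(\delta_{(i,j)}+\delta_{(i',j')}-\delta_{(i,j')}-\delta_{(i',j)})\) with small \(\varepsilon>0\). This swap preserves both marginals, keeps all entries nonnegative, and yields a distinct measure on \(\mathbb{R}^d\times\mathbb{R}^d\) because the four support points are distinct. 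Hence \(\Pi(\hat\rho_k,\hat\rho_{k+1})\) is not a singleton.

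\textbf{Step 2 (paths).} Even with a fixed coupling, I will build intermediate trajectories. For each pair \((x,y)\) in the support, join \(x\) to \(y\) by the straight line \(\gamma(s)=(1-s)x+sy\). Alternatively, use \(\gamma(s)+\phi(s)v\), where \(\phi(0)=\phi(1)=0\), \(\phi\not\equiv 0\), and \(v\neq 0\) is a fixed vector. Both path measures have the prescribed marginals at \(t_k\) and \(t_{k+1}\), but their marginals at interior times differ, for example at \(s=\tfrac12\) they are shifted by \(\phi(\tfrac12)v\). Gluing across intervals with the usual gluing lemma, concatenating conditionally on the shared endpoint, gives globally defined, distinct path measures. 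If velocity fields are preferred, I will note that \(\nabla\!\cdot(\rho_t w_t)=0\) admits nonzero divergence-free perturbations \(w_t\), for instance rotations, that leave \(\rho_t\) unchanged.

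\textbf{Main obstacle.} The substantive issue is fixing the notion of ``reconstruction'' precisely enough to make the statement nontrivial, and handling degeneracies such as Dirac snapshots or coincident points. I expect the cleanest route is to state the mild nondegeneracy assumption of at least two distinct points in two consecutive snapshots, which holds for any realistic dataset. The constructions themselves are routine.
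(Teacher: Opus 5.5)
Your proposal is correct and takes the same two-part route as the paper's proof: first non-uniqueness of the coupling in $\Pi(\hat\rho_k,\hat\rho_{k+1})$, then non-uniqueness of the endpoint-preserving paths for any fixed coupling. You make both steps explicit, via the marginal-preserving swap $\varepsilon(\delta_{(i,j)}+\delta_{(i',j')}-\delta_{(i,j')}-\delta_{(i',j)})$ and the bump deformation $\phi(s)v$, where the paper only asserts them. Your nondegeneracy caveat is a fair refinement that the paper omits, but you do not need to exclude the Dirac case, because your Step 2 alone already makes the pair (coupling, trajectories) non-unique even when the coupling is forced.
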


Indeed, for a single interval \([t_k,t_{k+1}]\), many couplings \(\pi_k\in\Pi(\hat\rho_k,\hat\rho_{k+1})\) share the same endpoint marginals. For any such coupling, each coupled endpoint pair can also be connected by infinitely many smooth paths. Thus, the observed snapshots determine neither a unique correspondence nor a unique intermediate trajectory. This observation motivates PACE to frame trajectory inference as a variational problem over couplings and paths, selected by a geometry-aware path action. The proof of Proposition~\ref{prop:illposed} is given in Appendix~\ref{app:illposed}.

\subsection{Geometry-aware bridge transport}
\label{sec:variational-principle}

To address the ill-posedness in Proposition~\ref{prop:illposed}, PACE formulates trajectory reconstruction as selecting both a cross-time coupling and a family of continuous paths, chosen by minimizing a geometry-aware action.
Given a time- and state-dependent metric tensor \(G_k(x,t)\) (constructed in \S\ref{sec:metric}), the cost of a path \(\gamma\), or 
"path action", between endpoints \((x,y)\) is
\begin{equation}
\label{eq:pace-path-action}
\mathcal{A}_{G_k}[\gamma]
=
\int_0^1
\dot\gamma(\tau)^\top
G_k\!\bigl(\gamma(\tau),\,t_{k,\tau}\bigr)
\dot\gamma(\tau)
\,d\tau,
\qquad
\gamma(0)=x,\;\gamma(1)=y,
\end{equation}
where \(\tau\in[0,1]\) is local interpolation time and the physical time is \(t_{k,\tau}=t_k+\tau(t_{k+1}-t_k)\).
The path action first defines the geometry-aware endpoint cost, and the ideal PACE bridge problem then transports mass using this induced cost:
\begin{empheq}[box=\eqbox]{equation}
\label{eq:ideal-pace-bridge}
\begin{aligned}
c_{G_k}(x,y)
&=
\inf_{\substack{\gamma(0)=x\\ \gamma(1)=y}}
\mathcal{A}_{G_k}[\gamma],&
\pi_k^\star
\in
\arg\min_{\pi_k\in\Pi(\hat\rho_k,\hat\rho_{k+1})}
\int
c_{G_k}(x,y)
\,d\pi_k(x,y).
\end{aligned}
\end{empheq}
Here \(c_{G_k}(x,y)\) is not a fixed Euclidean endpoint distance; it is the minimum action required to move from \(x\) to \(y\) under the metric \(G_k\).

This formulation makes explicit how PACE differs from standard OT.
If \(G_k(x,t)\equiv I\), then Eq.~\eqref{eq:pace-path-action} reduces to the Euclidean kinetic energy.
For any fixed endpoint pair \((x,y)\), the minimum-action path is the straight constant-speed interpolant, and the induced cost becomes
$
c_{G_k}(x,y)=\|y-x\|^2.
$
In this special case, Eq.~\eqref{eq:ideal-pace-bridge} reduces to standard quadratic-cost OT between \(\hat\rho_k\) and \(\hat\rho_{k+1}\).

PACE departs from this Euclidean case by using a state- and time-dependent anisotropic metric \(G_k(x,t)\).
The induced cost \(c_{G_k}(x,y)\) is no longer determined only by endpoint distance; it depends on the entire path and on how the \emph{path velocity} aligns with local developmental geometry.
As a result, the minimum-action cost generally has no closed-form solution and cannot be reduced to a fixed Euclidean endpoint cost.
The next section defines the spatiotemporal metric \(G_k(x,t)\), and \S~\ref{sec:alternating} describes how PACE approximates this 
problem with neural bridges and alternating coupling updates.

\paragraph{Finite-dimensional approximation.}
Problem~\eqref{eq:ideal-pace-bridge} is an ideal infinite-dimensional formulation.
PACE makes it tractable through two approximations (\S\ref{sec:alternating}):
(i) the path family is parameterized by neural bridges \(\gamma_\theta\) that satisfy endpoint constraints by construction;
(ii) the action integral is approximated on a finite time grid.
These yield a finite-dimensional alternating optimization over bridge parameters and correspondence variables.

\subsection{Time-dependent spatiotemporal tangent metric}
\label{sec:metric}

The metric should encode a simple prior: admissible motion should follow directions locally supported by the observed geometry, rather than cut across the cell-state manifold. Although local neighborhoods do not reveal the true direction of time, their dominant variation directions provide an undirected tangent approximation to the set of plausible state changes.
PACE therefore treats tangent motion as low cost and penalizes velocity components in the locally estimated normal subspace.
Because the local composition and geometry of snapshots can change across experimental time, the admissible subspaces are indexed by both state and time, yielding a time- and state-dependent metric \(G_k(x,t)\).
	        
        \paragraph{Local normal subspaces.}
        At each anchor point \(x_r\) observed at time \(t_r\), PACE estimates an anchor-wise local normal direction by finding its \(m_{\rm nn}\) nearest spatial neighbors within the same snapshot (including \(x_r\) itself), computing a Gaussian-kernel weighted covariance of the neighbor cloud, and extracting the minimum-variance principal direction \(n_r\).
        In two dimensions the normal projector is simply \(P_N^{(r)}=n_r n_r^\top\); in higher dimensions PACE adaptively selects the tangent-subspace dimension
        and builds the normal projector as the orthogonal complement (Appendix~\ref{app:adaptive-projectors}).
        
        \paragraph{Spatiotemporal metric construction.}
        For a query point \((x,t)\) in segment \(k\), PACE interpolates these anchor-wise local normal projectors across state and time using space-time Gaussian weights over all anchor points \(r\):
        \begin{equation}
            \label{eq:weights}
            \omega_r^{(k)}(x,t)
            \propto
            \exp\!\left(
            -\frac{\|x-x_r\|^2}{(h_x^{(k)})^2}
            -\frac{|t-t_r|^2}{(h_t^{(k)})^2}
            \right),
        \end{equation}
        with normalization chosen so that \(\sum_r\omega_r^{(k)}(x,t)=1\), where the bandwidths \(h_x^{(k)}\) and \(h_t^{(k)}\) are estimated adaptively for each segment (see Appendix~\ref{app:bandwidth}).
        These weights define an averaged normal projector field and the corresponding metric tensor
        \[
        C_N^{(k)}(x,t)=\sum_r \omega_r^{(k)}(x,t)P_N^{(r)}, \qquad G_k(x,t)=I+\alpha C_N^{(k)}(x,t),\qquad \alpha>0.
        \]
        The corresponding velocity cost is \(v^\top G_k(x,t)v = \|v\|^2 + \alpha\, v^\top C_N^{(k)}(x,t)\,v\). Because \(C_N^{(k)}\) is a positive-semidefinite average of local normal projectors, the second term increases the cost most strongly for velocities aligned with nearby estimated normal directions.

        \begin{proposition}[Normal-subspace penalizing property]
        \label{prop:metric}
        For every segment \(k\), query point \((x,t)\), and velocity \(v\in\mathbb{R}^d\),
        \begin{equation}
        \label{eq:metric-psd}
        v^\top G_k(x,t)v
        =
        \|v\|^2
        +
        \alpha v^\top C_N^{(k)}(x,t)v
        \ge
        \|v\|^2.
        \end{equation}
        Moreover, if \(C_N^{(k)}(x,t)=P_N(x,t)\) is an exact orthogonal projector onto the normal subspace \(T_{x,t}^{\perp}\), and \(v=v_T+v_N\) with \(v_T\in T_{x,t}\) and \(v_N\in T_{x,t}^{\perp}\), then
        \begin{equation}
        \label{eq:tangent-normal-cost}
        v^\top G_k(x,t)v
        =
        \|v_T\|^2
        +
        (1+\alpha)\|v_N\|^2.
        \end{equation}
        \end{proposition}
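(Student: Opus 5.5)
The plan is to expand the quadratic form directly from the definition \(G_k(x,t)=I+\alpha C_N^{(k)}(x,t)\). By linearity,
\[
v^\top G_k(x,t)v=v^\top v+\alpha\,v^\top C_N^{(k)}(x,t)v=\|v\|^2+\alpha\,v^\top C_N^{(k)}(x,t)v,
\]
which is the equality in Eq.~\eqref{eq:metric-psd}. The inequality then reduces to showing \(v^\top C_N^{(k)}(x,t)v\ge 0\), since \(\alpha>0\).

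For positive semidefiniteness, I will use the convex-combination structure of \(C_N^{(k)}\). Each anchor projector \(P_N^{(r)}\) is an orthogonal projector: it is \(n_rn_r^\top\) with \(\|n_r\|=1\) in two dimensions, and the orthogonal complement of the tangent subspace in general (Appendix~\ref{app:adaptive-projectors}). Hence \(P_N^{(r)}\) is symmetric and idempotent, so
\[
v^\top P_N^{(r)}v=\|P_N^{(r)}v\|^2\ge 0.
\]
The weights \(\omega_r^{(k)}(x,t)\) in Eq.~\eqref{eq:weights} are normalized exponentials, so they are nonnegative and sum to one. Therefore
\[
v^\top C_N^{(k)}v=\sum_r\omega_r^{(k)}\|P_N^{(r)}v\|^2\ge 0,
\]
which gives the inequality.

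For the second claim, assume \(C_N^{(k)}(x,t)=P_N(x,t)\) is the orthogonal projector onto \(T_{x,t}^\perp\), and write \(v=v_T+v_N\). Then \(P_Nv=v_N\), so
\[
v^\top P_Nv=\|P_Nv\|^2=\|v_N\|^2.
\]
Orthogonality of \(v_T\) and \(v_N\) gives the Pythagorean identity \(\|v\|^2=\|v_T\|^2+\|v_N\|^2\). Substituting both into the first identity yields \(\|v_T\|^2+(1+\alpha)\|v_N\|^2\), which is Eq.~\eqref{eq:tangent-normal-cost}.

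There is no genuine obstacle here. The only point that needs care is confirming that each anchor-wise \(P_N^{(r)}\) is a true orthogonal projector, meaning symmetric with a unit-norm \(n_r\) or orthonormal basis. I will check this against the construction in the appendix. If that construction only gave a symmetric PSD matrix, the argument would still go through, since PSD-ness is all the first part needs.
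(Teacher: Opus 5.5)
Your proposal is correct. It follows the paper's own reasoning: the paper gives no separate formal proof, but justifies the inequality as you do, by noting that \(C_N^{(k)}(x,t)\) is a nonnegatively weighted average of orthogonal normal projectors and is therefore positive semidefinite. The tangent--normal identity then follows from the same expansion \(v^\top G v=\|v\|^2+\alpha\|P_N v\|^2\) used in Appendix~\ref{app:metric-correspondence}, combined with the Pythagorean identity \(\|v\|^2=\|v_T\|^2+\|v_N\|^2\).
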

        
        Proposition~\ref{prop:metric} shows that the interpolated metric always preserves at least the Euclidean velocity cost, and that tangent motion retains its Euclidean cost while normal motion is penalized by a factor \(1+\alpha\) in the ideal projector case.
        Thus, the metric turns the qualitative principle of geometry-aligned cellular motion into an explicit action functional.

    \subsection{Finite-dimensional optimization by alternating bridge and OT coupling updates}
    \label{sec:alternating}

    PACE approximates the anisotropic bridge problem by alternating between an endpoint-conditioned bridge and a cross-time coupling. The coupling determines which endpoint pairs are used to train the bridge, while the bridge induces a path-action cost for updating the coupling. This creates a bootstrap mechanism in which more plausible couplings provide better endpoint supervision, and better bridges provide a geometry-aware approximation to the endpoint cost used for OT refinement. Since solving a separate minimum-action path problem for every candidate endpoint pair is infeasible under the state- and time-dependent metric $G_k(x,t)$, PACE amortizes path optimization with a shared endpoint-preserving neural bridge $\gamma_\theta(x,y,\tau)$. See Appendix~\ref{app:alternating-intuition} for a detailed discussion of this alternating procedure.

    \paragraph{Endpoint-preserving neural bridge.}
    For an endpoint pair \((x,y)\in\mathbb{R}^d\times\mathbb{R}^d\) and local time \(\tau\in[0,1]\), we define
    \begin{equation}
    \label{eq:bridge}
    \gamma_\theta(x,y,\tau)
    =
    (1-\tau)x+\tau y
    +
    \tau(1-\tau)\psi_\theta(x,y,\tau),
    \end{equation}
    where \(\psi_\theta:\mathbb{R}^d\times\mathbb{R}^d\times[0,1]\to\mathbb{R}^d\) is a neural network. The factor \(\tau(1-\tau)\) enforces \(\gamma_\theta(x,y,0)=x\) and \(\gamma_\theta(x,y,1)=y\), so the network only controls the interior deformation of the path~\citep{petrovic2026curly}. The bridge velocity is \(u_\theta(x,y,\tau)=\partial_\tau\gamma_\theta(x,y,\tau)\), computed by automatic differentiation. When \(\psi_\theta=0\), the bridge reduces to the straight Euclidean interpolant.

\paragraph{Bridge learning under fixed coupling.}
Assume that, for each adjacent snapshot pair, a coupling \(\pi_k\in\Pi(\hat\rho_k,\hat\rho_{k+1})\) is given. During the bridge update, endpoint pairs \((x_i^{(k)},x_j^{(k+1)})\sim\pi_k\) are sampled from the current coupling. We write \(t_{k,\tau}=t_k+\tau(t_{k+1}-t_k)\).

PACE trains the endpoint-preserving bridge by minimizing
\begin{equation}
\label{eq:total-loss}
\mathcal{L}_{\mathrm{bridge}}(\theta)
=
\lambda_{\mathrm{metric}}\mathcal{L}_{\mathrm{metric}}
+
\lambda_{\mathrm{reg}}\mathcal{L}_{\mathrm{reg}}.
\end{equation}
The main term $\mathcal{L}_{\mathrm{metric}}$ is the anisotropic bridge action induced by the spatiotemporal metric $G_k$.
An optional regularization term $\mathcal{L}_{\mathrm{reg}}$ penalizes large normal components and incoherent cross-segment velocities (Appendix~\ref{app:regularizers}).

The metric-action loss is
\begin{equation}
\label{eq:metric-loss}
\mathcal{L}_{\mathrm{metric}}
=
\mathbb{E}_{k,(x,y)\sim\pi_k}
\left[
\frac{1}{T}
\sum_{\ell=1}^{T}
u_\theta(x,y,\tau_\ell)^\top
G_k\!\left(
\gamma_\theta(x,y,\tau_\ell),
t_{k,\tau_\ell}
\right)
u_\theta(x,y,\tau_\ell)
\right].
\end{equation}
    
\paragraph{Coupling update under fixed bridge.}
After the bridge model has been updated under the current coupling, PACE recomputes the coupling using the full path action rather than Euclidean endpoint distance. For a candidate source-target pair \((x_i^{(k)},x_j^{(k+1)})\), the path-action cost is
\begin{equation}
\label{eq:path-cost}
c_{ij}^{\mathrm{path}}(\theta)
=
\frac{1}{M}
\sum_{m=1}^{M}
u_\theta(x_i^{(k)},x_j^{(k+1)},\tau_m)^\top
G_k\!\left(
\gamma_\theta(x_i^{(k)},x_j^{(k+1)},\tau_m),
t_{k,\tau_m}
\right)
u_\theta(x_i^{(k)},x_j^{(k+1)},\tau_m).
\end{equation}
The coupling is then updated by solving
\begin{equation}
\label{eq:ot-refined}
\pi_k
\leftarrow
\arg\min_{\pi_k\in\Pi(\hat\rho_k,\hat\rho_{k+1})}
\sum_{i,j}
c_{ij}^{\mathrm{path}}(\theta)\pi_{ij}^{(k)}.
\end{equation}
\paragraph{Alternating optimization.}
Given an initial coupling, PACE alternates between two blocks:
\begin{enumerate}
    \item \textbf{Bridge update.} Fix the current couplings \(\{\pi_k\}\), sample endpoint pairs \((x_i^{(k)},x_j^{(k+1)})\sim\pi_k\), and update \(\theta\) by taking gradient steps on Eq.~\eqref{eq:total-loss}.

    \item \textbf{Coupling update.} Fix the current bridge \(\gamma_\theta\), compute the path-action cost matrix in Eq.~\eqref{eq:path-cost}, and update each \(\pi_k\) by solving the OT problem in Eq.~\eqref{eq:ot-refined}.
\end{enumerate}
The bridge update learns low-action endpoint-conditioned paths under the current coupling, while the coupling update selects source-target transport using the learned path action.
In the idealized finite-dimensional setting where the bridge block is solved exactly for the metric-action objective, these two updates form a monotone block-coordinate descent scheme; Appendix~\ref{app:optimization-monotonicity} states this property and relates it to the stochastic neural implementation.
In practice, the coupling update is triggered periodically, for example every \(R_0\) epochs.
    
\subsection{Distilling endpoint-conditioned bridges into a global velocity field}
\label{sec:distillation}

The alternating optimization in \S\ref{sec:alternating} yields endpoint-conditioned bridges \(\gamma_\theta(x,y,\tau)\) and \(\tau\)-velocities \(u_\theta(x,y,\tau)=\partial_\tau\gamma_\theta(x,y,\tau)\). These bridges are pair-specific: to query a velocity at an arbitrary state \(z\) and time \(t\), one would need to know which endpoint pair \((x,y)\) and interpolation parameter \(\tau\) generated \(z\). PACE therefore distills the bridge dynamics into a global velocity field \(v_\phi(x,t)\) that can be evaluated without reference to a particular endpoint pair.

\paragraph{Distillation objective.}
For each segment \(k\), we sample \((x,y)\sim\pi_k\) and \(\tau\sim\mathrm{Uniform}[0,1]\), set \(x_{t_{k,\tau}}=\gamma_\theta(x,y,\tau)\) with \(t_{k,\tau}=t_k+\tau\Delta t_k\), and match the global velocity field:
\begin{equation}
\label{eq:distill-loss}
\mathcal{L}_{\mathrm{distill}}(\phi)
=
\mathbb{E}_{k,(x,y)\sim\pi_k}
\mathbb{E}_{\tau}
\left[
\left\|
v_\phi(x_{t_{k,\tau}},t_{k,\tau})
-
\frac{u_\theta(x,y,\tau)}{\Delta t_k}
\right\|^2
\right],
\end{equation}
where \(\Delta t_k=t_{k+1}-t_k\) converts \(\tau\)-velocity to physical-time velocity.
Once \(v_\phi\) is learned, continuous trajectories are generated by integrating
\begin{equation}
\label{eq:global-ode}
\frac{dx_t}{dt}=v_\phi(x_t,t),
\qquad x_0\sim\hat\rho_0.
\end{equation}

\begin{figure}[t]
    \centering
    \includegraphics[width=1.0\linewidth]{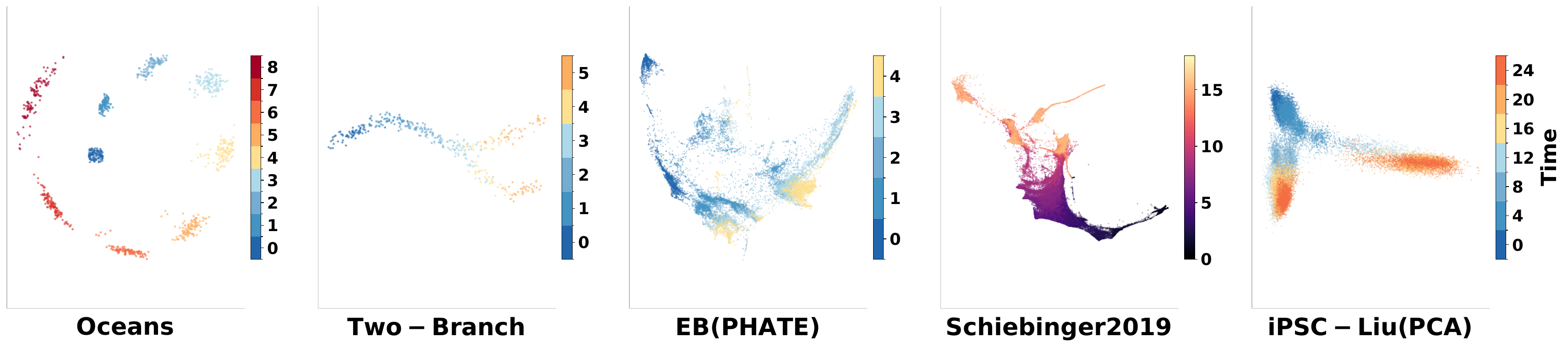}
    \caption{Overview of the 2D benchmark datasets. Points are colored by observed time for Ocean~\citep{petrovic2026curly}, Two-Branch, EB PHATE~\citep{moon2019visualizing}, Schiebinger2019~\citep{schiebinger2019optimal}, and iPSC-Liu~\citep{liu2020reprogramming}.}
    \label{fig:datasets-2d-overview}
\end{figure}
\begin{table}[hbtp]
\centering
\caption{Per-timepoint results on Ocean (2D)~\citep{petrovic2026curly}, holdout $t\in\{1,3,5,7\}$.}
\label{tab:ocean_2d_1_3_5_7}
\small
\setlength{\tabcolsep}{4pt}
\resizebox{1\linewidth}{!}{
\begin{tabular}{lcccccccccccc}
\toprule
Method & \multicolumn{4}{c}{MMD $\downarrow$} & \multicolumn{4}{c}{$\mathcal{W}_1\downarrow$} & \multicolumn{4}{c}{$\mathcal{W}_2\downarrow$} \\
\cmidrule(lr){2-5} \cmidrule(lr){6-9} \cmidrule(lr){10-13}
 & $t=1$ & $t=3$ & $t=5$ & $t=7$ & $t=1$ & $t=3$ & $t=5$ & $t=7$ & $t=1$ & $t=3$ & $t=5$ & $t=7$ \\
\midrule
Action Matching & \underline{0.7429} & 1.0662 & 0.9465 & 1.1289 & 0.3549 & 0.4823 & 0.2383 & 0.5073 & 0.3662 & 0.4846 & 0.2425 & 0.5143 \\
Aligned CFM & 0.7989 & 0.8754 & 1.0073 & 0.8272 & 0.0976 & 0.1644 & 0.3024 & 0.2245 & 0.1017 & 0.1673 & 0.3046 & 0.2262 \\
CURLY & 1.0545 & 0.8373 & 0.5869 & 0.8663 & 0.3766 & 0.1521 & 0.1077 & 0.2638 & 0.3771 & 0.1537 & 0.1102 & 0.2646 \\
DMSB & 1.1260 & \underline{0.6089} & \underline{0.2542} & \underline{0.4962} & 0.3060 & \underline{0.1097} & \underline{0.0435} & \underline{0.1097} & 0.3065 & \underline{0.1150} & \underline{0.0473} & \underline{0.1142} \\
MFM & 0.8482 & 0.7859 & 0.6928 & 0.5823 & \underline{0.0960} & 0.1433 & 0.1359 & 0.1279 & \underline{0.0991} & 0.1452 & 0.1385 & 0.1302 \\
OT-CFM & 0.9410 & 0.8169 & 0.6827 & 0.6305 & 0.1227 & 0.1428 & 0.1289 & 0.1321 & 0.1255 & 0.1459 & 0.1308 & 0.1334 \\
\midrule
\textbf{\model (ours)} & \textbf{0.4504} & \textbf{0.2588} & \textbf{0.0735} & \textbf{0.2779} & \textbf{0.0399} & \textbf{0.0398} & \textbf{0.0268} & \textbf{0.0505} & \textbf{0.0440} & \textbf{0.0434} & \textbf{0.0362} & \textbf{0.0534} \\
\bottomrule
\end{tabular}
}
\end{table}

\section{Experiment}
\label{sec:experiment}
The experiments evaluate PACE through five questions. First, on controlled 2D trajectories, can PACE recover smooth and branching geometry without paired identities or velocity supervision (Figure~\ref{fig:datasets-2d-overview}; Tables~\ref{tab:ocean_2d_1_3_5_7} and~\ref{tab:branch_toy_dim2})? Second, when reference velocities are available only for evaluation, do the learned dynamics align with held-out velocity fields on Ocean and EB PHATE (Figure~\ref{fig:velocity-diagnostics}; Appendix Tables~\ref{tab:ocean_holdout_velocity_metrics} and~\ref{tab:ebphate_holdout_velocity_metrics})? Third, on biological time courses, does PACE improve held-out reconstruction for single-cell differentiation and reprogramming from destructive snapshots (Table~\ref{tab:single_cell_2d_time_average})? Fourth, does PACE remain effective as the representation dimension increases (Tables~\ref{tab:ipsc_liu_dim10_50_mean})? Finally, which components of PACE account for the observed gains (Figure~\ref{fig:pace-ablation-per-tp})?
\subsection{Experimental setup}

\textbf{Evaluation protocol. }
All experiments use the same held-out snapshot protocol: methods observe only the training time points and are evaluated on the held-out time points listed in each table caption. We report distributional reconstruction quality using maximum mean discrepancy (MMD), 1-Wasserstein distance ($\mathcal{W}_1$), and 2-Wasserstein distance ($\mathcal{W}_2$); lower is better, bold denotes the best result, and underline denotes the second best.

\textbf{Datasets. }
Our benchmark suite includes controlled 2D temporal point clouds and biological single-cell time-course datasets.
The controlled datasets are designed to isolate geometric behavior in low-dimensional temporal data, including Ocean~\citep{shen2025multimarginal,petrovic2026curly} and a branching toy dataset, without relying on cell identities or velocity supervision.
The biological datasets test destructive snapshot settings in which individual cell identities are not shared across time, covering embryoid body differentiation~\citep{moon2019visualizing}, mouse reprogramming~\citep{schiebinger2019optimal}, induced trophoblast stem-cell reprogramming~\citep{liu2020reprogramming}, and multimodal CITE-seq/Multiome time courses~\citep{lance2022multimodal}.
Figure~\ref{fig:datasets-2d-overview} gives a visual overview of the 2D datasets used in the main low-dimensional experiments.
Detailed dataset sources and preprocessing choices are provided in Appendix~\ref{sec:app:benchmarks}.

\textbf{Baselines. }
We compare against representative transport- and flow-based trajectory models, covering action-based continuous dynamics~\citep{neklyudov2023action}, minibatch optimal-transport flow matching~\citep{tong2023conditional}, metric/geodesic flow matching~\citep{kapusniak2024metric}, pseudo-velocity-corrected flows~\citep{petrovic2026curly}, adversarial multi-marginal interpolant learning~\citep{kviman2026multimarginal}, and stochastic Schr\"odinger bridge dynamics~\citep{chen2023deep}.
Implementation details for baselines and PACE are provided in Appendices~\ref{sec:app:baselines} and~\ref{sec:app:pace-implementation}.
Qualitative prediction-versus-held-out overlays for the same 2D runs are shown in Appendix Figures~\ref{fig:app:stage2-ocean}--\ref{fig:app:stage2-ipsc}.

\subsection{Controlled 2D trajectories}

\begin{figure}[htp]
    \centering
    \begin{minipage}[t]{0.49\linewidth}
        \centering
        \includegraphics[width=\linewidth]{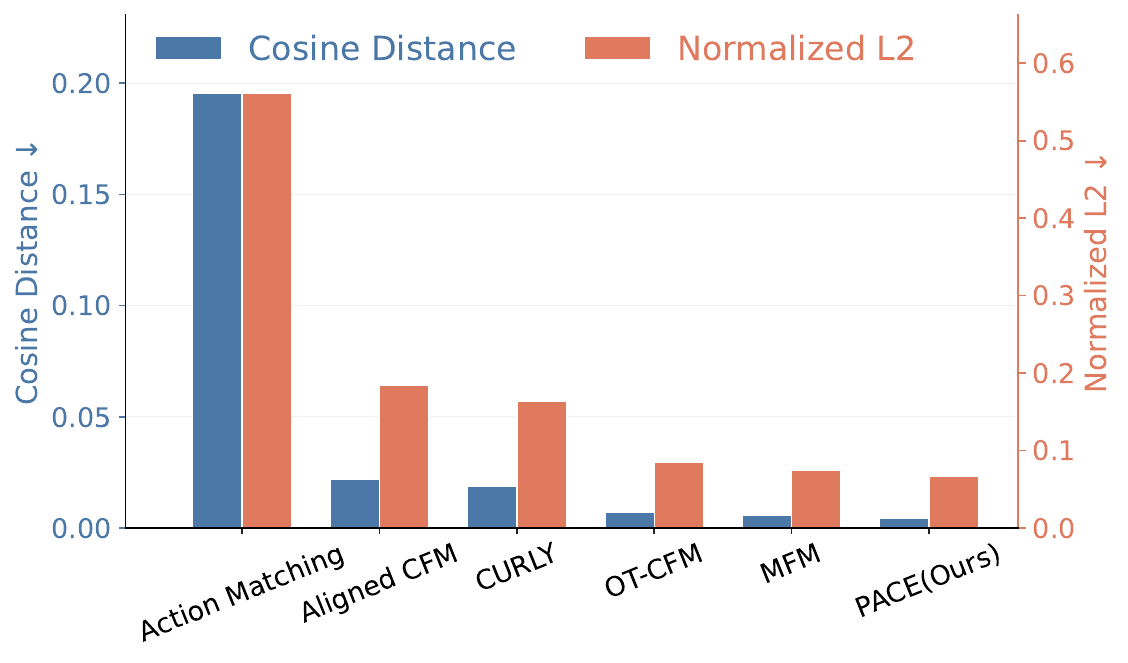}
        \centerline{\small (a) Ocean}
    \end{minipage}\hfill
    \begin{minipage}[t]{0.49\linewidth}
        \centering
        \includegraphics[width=\linewidth]{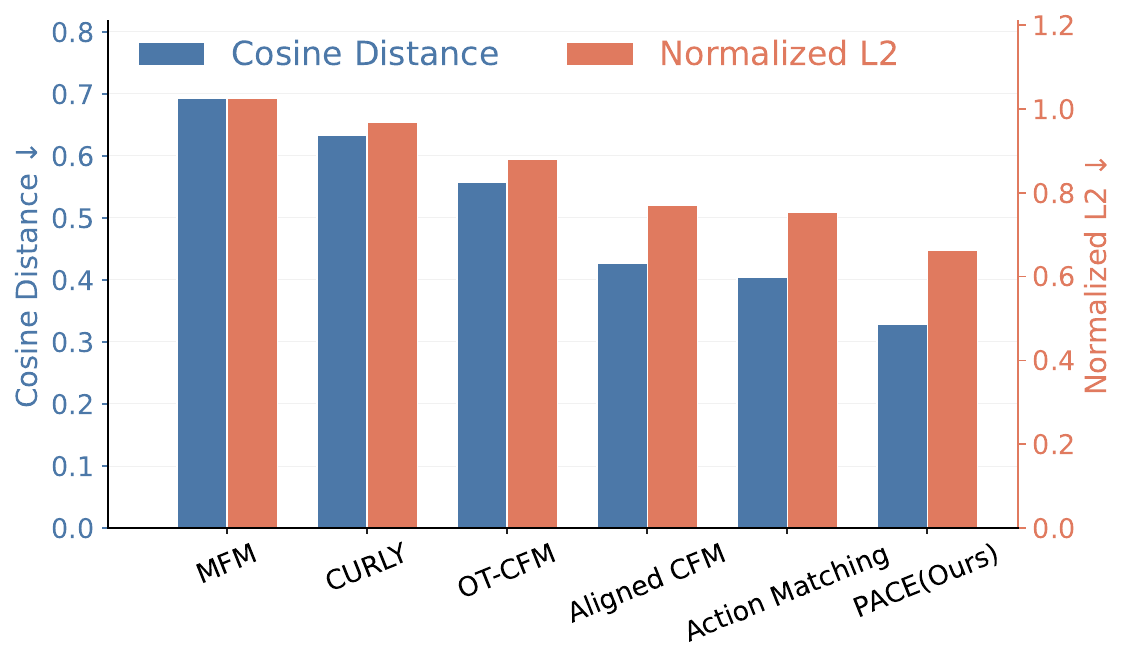}
        \centerline{\small (b) EB PHATE}
    \end{minipage}
    \caption{Velocity-alignment diagnostics on Ocean~\citep{petrovic2026curly} and EB PHATE~\citep{moon2019visualizing}. Ocean compares learned velocities with held-out simulator velocities, while EB PHATE compares learned 2D velocities with the RNA-velocity reference. Bars report cosine distance and normalized $L_2$ error; lower is better. }
    \label{fig:velocity-diagnostics} 
\end{figure}

We evaluate PACE on two controlled 2D settings, including Ocean, a rotational non-gradient benchmark from Curly Flow Matching~\citep{petrovic2026curly}, and a Two-Branch toy benchmark. Methods observe only unpaired point positions; simulator velocities and latent identities are not used for training.
For Ocean, we also hold out evaluation snapshots and compare the learned velocity field at those times with the simulator velocity field, testing whether the inferred dynamics recover the correct directionality rather than only matching held-out marginals.

Tables~\ref{tab:ocean_2d_1_3_5_7} and~\ref{tab:branch_toy_dim2} show that PACE gives the strongest overall held-out reconstruction. It is best on all Ocean metrics and remains strongest on most Two-Branch entries, with CURLY slightly better only for late-branch $\mathcal{W}_2$. The Ocean panel in Figure~\ref{fig:velocity-diagnostics} further checks the learned velocity direction against held-out simulator velocities; the per-timepoint diagnostic values are reported in Appendix Table~\ref{tab:ocean_holdout_velocity_metrics}. Together, these results indicate that the geometry-aware bias helps recover both smooth rotational motion and simple branching structure from snapshots alone.

\subsection{Low-dimensional single-cell trajectories}

We next ask whether the same behavior carries over to biological time courses represented in low-dimensional embeddings, where the true cell identities are unobserved and geometric structure must be inferred from destructive snapshots. 
Table~\ref{tab:single_cell_2d_time_average} summarizes the time-averaged low-dimensional biological results across EB PHATE, iPSC-Liu, and Schiebinger2019; the corresponding per-timepoint results are reported in Appendix Tables~\ref{tab:eb_phate_2d_3}, \ref{tab:ipsc_liu_2d_4_16}, and~\ref{tab:schiebinger2019_2d_6_11_16}. On EB PHATE, PACE obtains the best MMD, $\mathcal{W}_1$, and $\mathcal{W}_2$, showing that the method improves held-out marginal reconstruction in a manifold-aware PHATE embedding. The EB PHATE panel in Figure~\ref{fig:velocity-diagnostics} adds an independent directionality check by comparing the learned 2D velocity field at the held-out snapshot with RNA velocity, which is used only for evaluation. Appendix Table~\ref{tab:ebphate_holdout_velocity_metrics} 
shows that PACE also best aligns with this external velocity proxy, 
suggesting that the gain reflects both endpoint reconstruction and plausible local flow direction. 
On iPSC-Liu 2D, PACE is best on every reported metric. 
On Schiebinger2019, PACE is best on seven of nine detailed metric-timepoint entries and remains close on the remaining final-time scores. 
Overall, the low-dimensional biological results show that PACE transfers 
the controlled-geometry gains to both differentiation and reprogramming snapshots.
Compared with the strongest baseline for each dataset--metric pair, PACE reduces MMD, $\mathcal{W}_1$, and $\mathcal{W}_2$ by 36.2\%, 21.5\%, and 13.6\% on EB PHATE; by 16.1\%, 26.4\%, and 25.0\% on iPSC-Liu; and by 31.2\%, 25.5\%, and 22.4\% on Schiebinger2019.

\begin{table}[!htbp]
\centering
\scriptsize
\setlength{\tabcolsep}{3pt}
\renewcommand{\arraystretch}{0.88}
\caption{Time-averaged low-dimensional single-cell results. EB~\citep{moon2019visualizing} uses $t=3$, iPSC-Liu~\citep{liu2020reprogramming} uses $t\in\{4,16\}$, and Schiebinger2019~\citep{schiebinger2019optimal} uses $t\in\{6,11,16\}$.}
\label{tab:single_cell_2d_time_average}
\resizebox{\linewidth}{!}{
\begin{tabular}{lccccccccc}
\toprule
Method
& \multicolumn{3}{c}{EB PHATE}
& \multicolumn{3}{c}{iPSC-Liu}
& \multicolumn{3}{c}{Schiebinger2019} \\
\cmidrule(lr){2-4} \cmidrule(lr){5-7} \cmidrule(lr){8-10}
& MMD $\downarrow$ & $\mathcal{W}_1\downarrow$ & $\mathcal{W}_2\downarrow$
& MMD $\downarrow$ & $\mathcal{W}_1\downarrow$ & $\mathcal{W}_2\downarrow$
& MMD $\downarrow$ & $\mathcal{W}_1\downarrow$ & $\mathcal{W}_2\downarrow$ \\
\midrule
Action Matching
& 0.2436 & 0.4553 & 0.5273
& 0.5230 & 0.8005 & 0.9730
& 0.4057 & 0.4161 & 0.4754 \\
Aligned CFM
& \underline{0.1056} & \underline{0.2657} & \underline{0.3243}
& 0.5428 & \underline{0.6594} & \underline{0.8320}
& 0.3239 & 0.3175 & 0.3931 \\
CURLY
& 0.1384 & 0.3223 & 0.4422
& 0.5613 & 0.7186 & 0.9216
& 0.7687 & 1.3684 & 1.4660 \\
DMSB
& 0.1325 & 0.3053 & 0.4341
& 0.6456 & 1.0244 & 1.1023
& \underline{0.3030} & \underline{0.2774} & \underline{0.3662} \\
MFM
& 0.1518 & 0.3704 & 0.4451
& 0.5533 & 0.6657 & 0.8838
& 0.6990 & 1.5897 & 1.7942 \\
OT-CFM
& 0.1069 & 0.2784 & 0.3622
& \underline{0.5228} & 0.7166 & 0.9261
& 0.8072 & 1.9095 & 2.0235 \\
\midrule
\textbf{\model (ours)}
& \textbf{0.0674} & \textbf{0.2087} & \textbf{0.2803}
& \textbf{0.4385} & \textbf{0.4853} & \textbf{0.6238}
& \textbf{0.2084} & \textbf{0.2067} & \textbf{0.2840} \\
\bottomrule
\end{tabular}
}
\end{table}

\newpage
\subsection{Higher-dimensional single-cell benchmarks}

\noindent
\begin{minipage}[t]{0.53\linewidth}
\vspace{0pt}
High-dimensional PCA tests whether PACE still helps when Euclidean distances lose temporal contrast. On iPSC-Liu, PACE is best on all time-averaged 10D/50D metrics in Table~\ref{tab:ipsc_liu_dim10_50_mean}. On OP-Cite/OP-Multi 100D, it remains on the empirical Pareto front, leading OP-Cite MMD and OP-Multi $\mathcal{W}_1/\mathcal{W}_2$ in Appendix Table~\ref{tab:op_100d}. Figure~\ref{fig:norm-time-concentration-vs-dim} shows the diagnostic behind this regime, with norm CV approaching $0.3$ and inter-time separation approaching the within-time radius. The useful signal is therefore not global separation between time means, but local anisotropy within each snapshot; PACE turns that local geometry into coupling costs. This helps distinguish directionally consistent moves from distance-similar but geometrically implausible pairings. Appendix~\ref{sec:app:concentration} and Figures~\ref{fig:app:ipsc_dim}--\ref{fig:app:ipsc} provide the detailed concentration analysis.
\end{minipage}\hfill
\begin{minipage}[t]{0.44\linewidth}
\vspace{0pt}
\centering
\scriptsize
\setlength{\tabcolsep}{3pt}
\captionof{table}{iPSC-Liu~\citep{liu2020reprogramming} time-averaged results over holdouts $t\in\{4,8,12,16\}$.}
\label{tab:ipsc_liu_dim10_50_mean}
\begin{tabular}{llccc}
\toprule
Dim & Method & MMD $\downarrow$ & $\mathcal{W}_1\downarrow$ & $\mathcal{W}_2\downarrow$ \\
\midrule
10D & Action Matching & 0.5555 & 3.7801 & 4.3743 \\
 & Aligned CFM & \underline{0.4644} & \underline{2.9405} & \underline{3.3825} \\
 & CURLY & 0.5369 & 3.6128 & 4.2102 \\
 & DMSB & 0.7713 & 6.3088 & 6.4593 \\
 & MFM & 0.5273 & 3.4066 & 3.8382 \\
 & OT-CFM & 0.5145 & 3.3850 & 3.9447 \\
\cmidrule(lr){2-5}
 & \textbf{PACE (ours)} & \textbf{0.4095} & \textbf{2.9226} & \textbf{3.1879} \\
\midrule
50D & Action Matching & 0.3125 & 7.6126 & 8.4868 \\
 & Aligned CFM & \underline{0.2624} & \underline{7.3133} & \underline{8.0552} \\
 & CURLY & 0.3144 & 8.1144 & 8.9096 \\
 & DMSB & 0.5638 & 10.9368 & 11.2423 \\
 & MFM & 0.2853 & 7.7924 & 8.5454 \\
 & OT-CFM & 0.2898 & 7.8392 & 8.5954 \\
\cmidrule(lr){2-5}
 & \textbf{PACE (ours)} & \textbf{0.2550} & \textbf{6.7537} & \textbf{7.2836} \\
\bottomrule
\end{tabular}
\end{minipage}

\subsection{Ablation study}

We finally isolate the contribution of the main PACE components on the Schiebinger2019 holdouts used in Table~\ref{tab:schiebinger2019_2d_6_11_16}. Figure~\ref{fig:pace-ablation-per-tp} compares the full model with variants that remove coupling rematching, set the metric penalty to the Euclidean case, use all neighbors rather than local neighborhoods, or ablate the spatial and temporal kernel structure used to smooth local geometry. Each simplification degrades at least one metric or time point, and the full PACE variant gives the most stable low errors across MMD, $\mathcal{W}_1$, and $\mathcal{W}_2$. This indicates that the gains do not come from a single implementation detail: the anisotropic metric, local-neighborhood construction, spatial-temporal geometry smoothing, and rematching step all contribute to the final trajectory reconstruction.

\begin{center}
    \centering
    \begin{minipage}[t]{0.40\linewidth}
        \centering
        \vspace{0pt}
        \includegraphics[width=\linewidth]{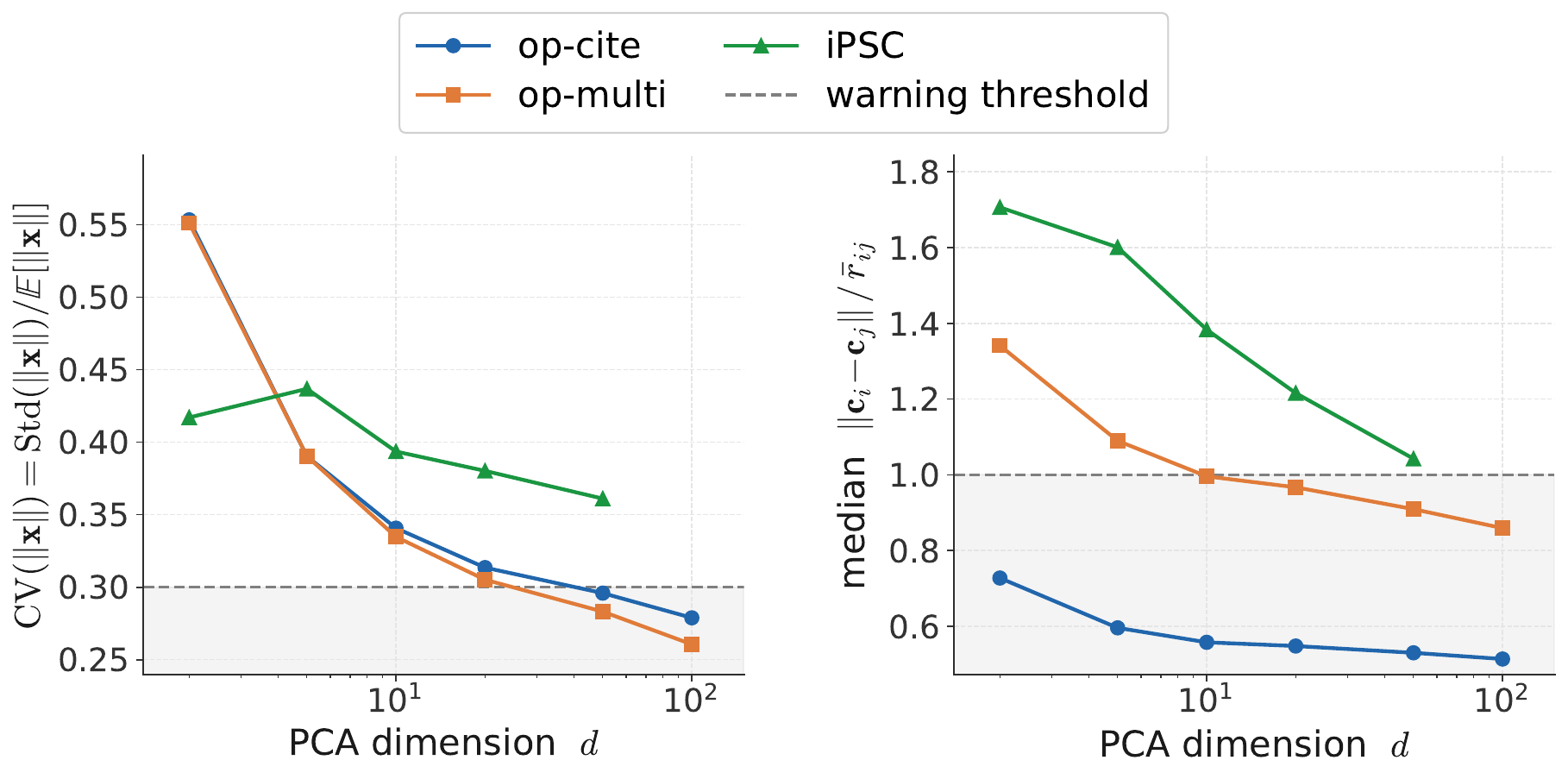}
        \vspace{-0.3em}
        \captionof{figure}{High-dimensional concentration diagnostics. Dashed lines mark norm CV $=0.3$ and inter-time/within-time ratio $=1.0$.}
        \label{fig:norm-time-concentration-vs-dim}
    \end{minipage}\hfill
    \begin{minipage}[t]{0.58\linewidth}
        \centering
        \vspace{0pt}
        \includegraphics[width=\linewidth]{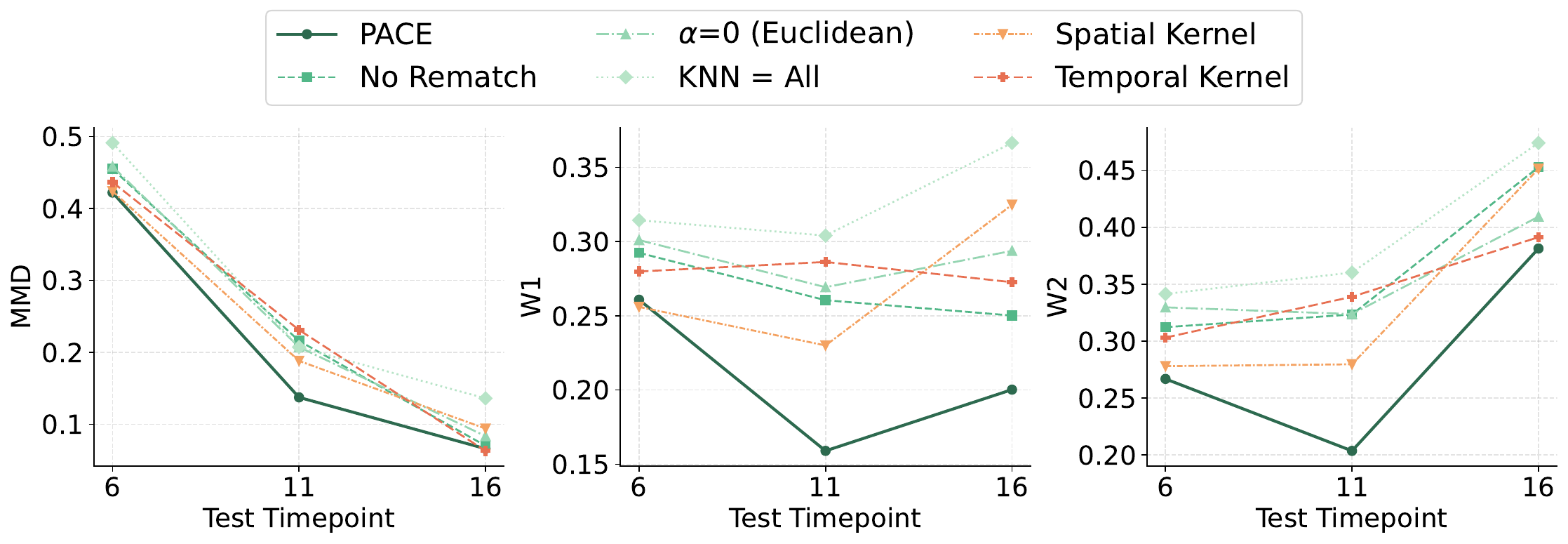}
        \vspace{-0.3em}
        \captionof{figure}{PACE ablations on Schiebinger2019~\citep{schiebinger2019optimal} holdouts. Lower is better; variants remove rematching, use Euclidean action ($\alpha=0$), use all neighbors, or drop spatial/temporal kernels.}
        \label{fig:pace-ablation-per-tp}
    \end{minipage}
\end{center}

\section{Conclusion}
In this paper, we have introduced PACE, a geometry-aware framework for trajectory inference from destructive single-cell time-course snapshots.
PACE replaces fixed Euclidean endpoint costs with an anisotropic path cost induced by a local spatiotemporal metric, then alternates bridge learning with coupling refinement and distills the resulting dynamics into a velocity field.
Across controlled and biological benchmarks, the results support local geometry as a useful inductive bias for held-out snapshot reconstruction without paired cells, lineage tracing, or velocity supervision; the velocity diagnostics further show that the learned flows recover plausible directionality when external velocity references are available only for evaluation.
Ablations show that this behavior depends on the combination of anisotropic local costs, neighborhood-restricted geometry, smoothing, and coupling rematching rather than a single implementation detail.
Future work will focus on more robust metric estimation, uncertainty quantification, and partially supervised or multimodal time-course settings.

\section*{Acknowledgement}
We thank Yuchen Yang, Minsi Ren, Zhuo Xu, Yucheng Luo, and Mingzheng Fang for discussions and for providing feedback on our manuscript.
We also gratefully acknowledge the support of Westlake University Research Center for Industries of the Future; Westlake University Center for High-performance Computing.
The content is solely the responsibility of the authors and does not necessarily represent the official views of the funding entities.

\bibliographystyle{unsrt}
\bibliography{neurips_2026}

\appendix
\clearpage
\section*{Technical appendices and supplementary material}
\setcounter{figure}{0}
\setcounter{table}{0}
\renewcommand{\thefigure}{A.\arabic{figure}}
\renewcommand{\thetable}{A.\arabic{table}}
\renewcommand{\theHfigure}{appendix.\arabic{figure}}
\renewcommand{\theHtable}{appendix.\arabic{table}}
\begin{enumerate}[label=\Alph*.]
  \item \nameref{sec:app:benchmarks} \dotfill \pageref{sec:app:benchmarks}
  \item \nameref{sec:app:baselines} \dotfill \pageref{sec:app:baselines}
  \item \nameref{sec:app:pace-implementation} \dotfill \pageref{sec:app:pace-implementation}
  \item \nameref{app:illposed} \dotfill \pageref{app:illposed}
  \item \nameref{app:metric-correspondence} \dotfill \pageref{app:metric-correspondence}
  \item \nameref{app:optimization-monotonicity} \dotfill \pageref{app:optimization-monotonicity}
  \item \nameref{app:regularizers} \dotfill \pageref{app:regularizers}
  \item \nameref{sec:app:concentration} \dotfill \pageref{sec:app:concentration}
  \item \nameref{sec:app:ipsc} \dotfill \pageref{sec:app:ipsc}
  \item \nameref{app:alternating-intuition} \dotfill \pageref{app:alternating-intuition}
  \item \nameref{app:adaptive-projectors} \dotfill \pageref{app:adaptive-projectors}
  \item \nameref{app:bandwidth} \dotfill \pageref{app:bandwidth}
  \item \nameref{app:limitations} \dotfill \pageref{app:limitations}
\end{enumerate}

\vspace{1em}

\section{Benchmark and Metric Details}
\label{sec:app:benchmarks}

All datasets are cast as unpaired time-course point clouds. A point is a cell or synthetic particle state, and a time point is an empirical marginal distribution. Methods receive only the training snapshots; held-out snapshots are used only for evaluation. No method receives cell identities, cross-time correspondences, RNA velocity, or any ground-truth trajectory pairing.

\paragraph{Shared split and rollout protocol.}
For a held-out label $t_h$, the evaluator finds the nearest observed training labels $t_a<t_h<t_b$. The learned velocity field is rolled out from the empirical source cloud at $t_a$ toward $t_b$ on the same normalized training-time scale used during training. We use 101 Euler ODE steps and extract the intermediate frame at ratio $(t_h-t_a)/(t_b-t_a)$. The predicted cloud is then compared with the empirical held-out cloud at $t_h$. Thus, the reported numbers measure recovery of the missing marginal distribution, not recovery of unobserved cell identities.

\clearpage
\begingroup
\centering
\captionof{table}{Dataset sources, representations, and held-out splits used in Section~\ref{sec:experiment}. Controlled 2D coordinates are used directly. Biological embeddings are standardized by fitting a \texttt{StandardScaler} on training snapshots only, then applying it to train and held-out snapshots.}
\label{tab:app:datasets}
\small
\setlength{\tabcolsep}{3pt}
\begin{tabular}{p{0.16\linewidth}p{0.31\linewidth}p{0.25\linewidth}p{0.20\linewidth}}
\toprule
Benchmark & Source and role & Representation and preprocessing & Train/test labels \\
\midrule
Ocean (Gulf of Mexico vortex) & HYCOM-derived particle benchmark from Shen et al.~\citep{shen2025multimarginal}, also used in CURLY~\citep{petrovic2026curly}. Training uses only particle positions; velocities are not used for supervision. & \path{data/oceans/oceans.npz}; \texttt{positions} has shape $(9,111,2)$. No whitening; full frames are used. & Train $\{0,2,4,6,8\}$; test $\{1,3,5,7\}$. \\
\addlinespace
Two-Branch & In-house synthetic bifurcation generated by \path{src/data_preprocess/generate_two_branch_data.py}. It tests recovery of a trunk that splits into two branches. & Six 2D frames, 64 points per frame, with latent pseudotime and branch labels stored for diagnostics but not given to methods. No whitening; full frames are used. & Train $\{0,2,3,5\}$; test $\{1,4\}$. \\
\addlinespace
EB PHATE & Embryoid body differentiation data from PHATE~\citep{moon2019visualizing}. It represents a low-dimensional differentiation trajectory. & \path{data/eb_velocity_v5.npz}; 16,819 cells with \texttt{phate} $(N,2)$ and \texttt{pcs} $(N,100)$. The 2D experiment uses PHATE and fits whitening on training cells only. & Train $\{0,1,2,4\}$; test $\{3\}$. \\
\addlinespace
Schiebinger2019 & Mouse reprogramming time course from Schiebinger et al.~\citep{schiebinger2019optimal}, converted from the scNODE data package using serum-filtered FLE coordinates. & Local H5AD file; \texttt{X} stores 2D FLE coordinates and \texttt{obs["day"]} stores integer day labels. Whitening is fit on training cells only. & Train all integer days $0,\ldots,18$ except $\{6,11,16\}$; test $\{6,11,16\}$. \\
\addlinespace
iPSC-Liu & Human reprogramming time course from Liu et al.~\citep{liu2020reprogramming}. It is used to evaluate missing-timepoint reconstruction in PCA representations. & \path{data/iPSC_liu_pca.h5ad}; \texttt{obsm["X\_pca"]} provides PCA scores and \texttt{obs["timepoint"]} stores labels such as \texttt{D0}, parsed as integers. Whitening is fit on training cells only. & 2D: train $\{0,8,12,20,24\}$, test $\{4,16\}$. 10D/50D: separate single-holdout runs for $t\in\{4,8,12,16\}$. \\
\addlinespace
OP-Cite/OP-Multi & NeurIPS Open Problems multimodal single-cell integration challenge data~\citep{lance2022multimodal}. OP-Cite uses CITE-seq features; OP-Multi uses Multiome features. & \path{op_cite_inputs_0.h5ad} and \path{op_train_multi_targets_0.h5ad}; both use \texttt{obsm["X\_pca"]}. Table~\ref{tab:op_100d} uses all 100 PCs, with whitening fit on training cells only. & Train $\{2,4,7\}$; test $\{3\}$. \\
\bottomrule
\end{tabular}
\par
\endgroup

\paragraph{Ocean benchmark.}
The Ocean benchmark follows the Gulf of Mexico vortex experiment of Shen et al.~\citep{shen2025multimarginal}. The data are generated by first extracting a velocity field around a vortex feature from high-resolution HYbrid Coordinate Ocean Model (HYCOM) reanalysis data~\citep{chassignet2007hycom}, then simulating particles representing buoys or ocean debris through that field. The original benchmark contains approximately 1000 observations across five training times and four validation times; the released array used here has nine snapshots with 111 particles per snapshot. Unlike the reference-family model in Shen et al., PACE and all baselines in our comparison receive only the particle positions at training snapshots, with no velocity supervision. The HYCOM-derived velocity field is held out from training and used only for the velocity-alignment diagnostic in Table~\ref{tab:ocean_holdout_velocity_metrics}.

\paragraph{Velocity-alignment diagnostics.}
Figure~\ref{fig:velocity-diagnostics} summarizes the velocity-alignment checks used in the main experiments. The tables below report the numerical values behind the two panels: Ocean compares learned velocities with held-out simulator velocities, while EB PHATE compares learned 2D velocities with the RNA-velocity reference. These reference velocities are used only for diagnostics, not for training.

\begin{table}[hbtp]
\centering
\begin{minipage}{0.75\linewidth}
\centering
\caption{Branching toy (2D) results at held-out time points $t\in\{1,4\}$.}
\label{tab:branch_toy_dim2}
\small
\setlength{\tabcolsep}{3pt}
\renewcommand{\arraystretch}{0.82}
\begin{tabular}{lcccccc}
\toprule
Method & \multicolumn{2}{c}{MMD $\downarrow$} & \multicolumn{2}{c}{$\mathcal{W}_1\downarrow$} & \multicolumn{2}{c}{$\mathcal{W}_2\downarrow$} \\
\cmidrule(lr){2-3} \cmidrule(lr){4-5} \cmidrule(lr){6-7}
 & $t=1$ & $t=4$ & $t=1$ & $t=4$ & $t=1$ & $t=4$ \\
\midrule
Action Matching & 0.6129 & 0.2905 & 0.3343 & 0.2732 & 0.3419 & 0.3048 \\
Aligned CFM & 0.6524 & 0.3439 & 0.3874 & 0.309 & 0.3944 & 0.3371 \\
CURLY & \underline{0.2664} & \underline{0.0767} & 0.1703 & \underline{0.1797} & 0.2123 & \textbf{0.2087} \\
DMSB & 1.0232 & 1.1585 & 4.5832 & 2.8358 & 4.5845 & 2.8406 \\
MFM & 0.3817 & 0.1145 & \underline{0.1686} & 0.1865 & \underline{0.1832} & \underline{0.2116} \\
OT-CFM & 0.5279 & 0.1244 & 0.2433 & 0.1841 & 0.2509 & 0.2119 \\
\midrule
\textbf{\model (ours)} & \textbf{0.1968} & \textbf{0.0146} & \textbf{0.1386} & \textbf{0.1752} & \textbf{0.1686} & 0.2236 \\
\bottomrule
\end{tabular}
\end{minipage}
\end{table}

\begingroup
\centering
\captionof{table}{Ocean~\citep{petrovic2026curly} velocity-alignment diagnostics on held-out time points. The simulator velocity is used only for evaluation. Cosine distance is $1-\cos(\hat v, v)$, and normalized $L_2$ compares unit-normalized velocity directions; lower is better.}
\label{tab:ocean_holdout_velocity_metrics}
\scriptsize
\setlength{\tabcolsep}{2.5pt}
\resizebox{\linewidth}{!}{
\begin{tabular}{lcccccccccc}
\toprule
Method & \multicolumn{5}{c}{Cosine distance $\downarrow$} & \multicolumn{5}{c}{Normalized $L_2\downarrow$} \\
\cmidrule(lr){2-6} \cmidrule(lr){7-11}
 & $t=1$ & $t=3$ & $t=5$ & $t=7$ & Avg. & $t=1$ & $t=3$ & $t=5$ & $t=7$ & Avg. \\
\midrule
Action Matching & 0.3535 & 0.0408 & 0.1974 & 0.1900 & 0.1954 & 0.8174 & 0.2317 & 0.5965 & 0.5993 & 0.5612 \\
Aligned CFM & 0.0440 & 0.0269 & 0.0101 & 0.0073 & 0.0221 & 0.2808 & 0.2118 & 0.1311 & 0.1122 & 0.1840 \\
CURLY & 0.0227 & 0.0068 & 0.0220 & 0.0231 & 0.0186 & 0.1997 & \underline{0.0906} & 0.1709 & 0.1955 & 0.1642 \\
MFM & \textbf{0.0029} & \underline{0.0063} & 0.0089 & \underline{0.0043} & \underline{0.0056} & \textbf{0.0417} & 0.0913 & 0.0965 & \underline{0.0698} & \underline{0.0748} \\
OT-CFM & \underline{0.0040} & 0.0101 & \underline{0.0088} & 0.0058 & 0.0072 & \underline{0.0738} & 0.1147 & \underline{0.0731} & 0.0801 & 0.0854 \\
\midrule
\textbf{\model (ours)} & 0.0041 & \textbf{0.0030} & \textbf{0.0069} & \textbf{0.0031} & \textbf{0.0043} & 0.0756 & \textbf{0.0631} & \textbf{0.0671} & \textbf{0.0633} & \textbf{0.0673} \\
\bottomrule
\end{tabular}
}
\par
\endgroup

\vspace{0.75em}
\begingroup
\centering
\captionof{table}{EB PHATE~\citep{moon2019visualizing} velocity-alignment diagnostics at the held-out snapshot $t=3$. The RNA-velocity reference is used only for evaluation. Cosine distance is $1-\cos(\hat v, v)$, and normalized $L_2$ compares unit-normalized velocity directions; lower is better.}
\label{tab:ebphate_holdout_velocity_metrics}
\small
\setlength{\tabcolsep}{4pt}
\resizebox{0.65\linewidth}{!}{
\begin{tabular}{lcc}
\toprule
Method & Cosine distance $\downarrow$ & Normalized $L_2\downarrow$ \\
\midrule
Action Matching & \underline{0.4041} & \underline{0.7542} \\
Aligned CFM & 0.4273 & 0.7707 \\
CURLY & 0.6325 & 0.9693 \\
MFM & 0.6930 & 1.0261 \\
OT-CFM & 0.5571 & 0.8807 \\
\midrule
\textbf{\model (ours)} & \textbf{0.3287} & \textbf{0.6630} \\
\bottomrule
\end{tabular}
}
\par
\endgroup

\paragraph{Low-dimensional single-cell reconstruction results.}
Table~\ref{tab:single_cell_2d_time_average} reports the time-averaged summary used in the main text. The tables below give the corresponding per-timepoint results for each low-dimensional biological benchmark: EB PHATE has one held-out PHATE snapshot, iPSC-Liu has two held-out 2D PCA snapshots, and Schiebinger2019 has three held-out reprogramming days. These detailed tables show whether the averaged gains are consistent across individual held-out times.

\paragraph{Subsampling and dimensionality.}
For biological datasets, \texttt{samples\_per\_timepoint} is either an experiment-specific cap or \texttt{null}. When it is a cap, each selected time point is sampled without replacement using the experiment seed before whitening. When it is \texttt{null}, all available cells from the selected time points are used. For EB, the raw timepoint counts are $2381,4163,3278,3665,3332$ across labels $0,\ldots,4$. For iPSC-Liu and OP-Cite/OP-Multi, the dimensionality is chosen by taking the first $d$ PCA coordinates from \texttt{X\_pca}. The 10D and 50D iPSC-Liu tables aggregate separate single-holdout runs; in each run the held-out label is removed from the training label set.

\paragraph{Metrics.}
For every held-out time point, we compare the predicted point cloud with the observed point cloud using three distributional metrics. MMD is computed with a Gaussian kernel and a median-heuristic bandwidth, and the reported value is the square root of the nonnegative MMD estimate. $\mathcal{W}_1$ uses uniform empirical weights and the Euclidean ground cost between predicted and observed samples. $\mathcal{W}_2$ uses uniform empirical weights and squared Euclidean ground cost, then reports the square root of the optimal transport value. These metrics are distributional: they evaluate reconstruction of the held-out marginal distribution, not cell-wise recovery of unobserved identities.

\begin{table}[hbtp]
\centering
\begin{minipage}{0.6\linewidth}
\centering
\caption{Per-timepoint results on EB PHATE~\citep{moon2019visualizing} (2D) with held-out $t=3$.}
\label{tab:eb_phate_2d_3}

\small
\setlength{\tabcolsep}{3pt}

\begin{tabular}{lccc}
\toprule
Method & MMD $\downarrow$ & $\mathcal{W}_1\downarrow$ & $\mathcal{W}_2\downarrow$ \\
\cmidrule(lr){2-2} \cmidrule(lr){3-3} \cmidrule(lr){4-4}
 & $t=3$ & $t=3$ & $t=3$ \\
\midrule
Action Matching & 0.2436 & 0.4553 & 0.5273 \\
Aligned CFM & \underline{0.1056} & \underline{0.2657} & \underline{0.3243} \\
CURLY & 0.1384 & 0.3223 & 0.4422 \\
DMSB & 0.1325 & 0.3053 & 0.4341 \\
MFM & 0.1518 & 0.3704 & 0.4451 \\
OT-CFM & 0.1069 & 0.2784 & 0.3622 \\
\midrule
\textbf{\model (ours)} & \textbf{0.0674} & \textbf{0.2087} & \textbf{0.2803} \\
\bottomrule
\end{tabular}

\end{minipage}
\end{table}

\vspace{0.75em}
\begin{table}[hbtp]
\centering
\begin{minipage}{0.85\linewidth}
\centering

\caption{Per-timepoint results on iPSC-Liu~\citep{liu2020reprogramming} (2D) with held-out time points $t\in\{4,16\}$.}
\label{tab:ipsc_liu_2d_4_16}

\small
\setlength{\tabcolsep}{4pt}

\begin{tabular}{lcccccc}
\toprule
Method & \multicolumn{2}{c}{MMD $\downarrow$} & \multicolumn{2}{c}{$\mathcal{W}_1\downarrow$} & \multicolumn{2}{c}{$\mathcal{W}_2\downarrow$} \\
\cmidrule(lr){2-3} \cmidrule(lr){4-5} \cmidrule(lr){6-7}
 & $t=4$ & $t=16$ & $t=4$ & $t=16$ & $t=4$ & $t=16$ \\
\midrule
Action Matching & \underline{0.7369} & 0.3090 & 0.9719 & 0.6290 & 1.0726 & \underline{0.8733} \\
Aligned CFM & 0.8265 & \underline{0.2590} & 0.7781 & 0.5407 & \underline{0.7858} & 0.8781 \\
CURLY & 0.8156 & 0.3069 & 0.9030 & \underline{0.5341} & 0.9292 & 0.9140 \\
DMSB & 0.8663 & 0.4248 & \underline{0.7176} & 1.3311 & 0.7986 & 1.4060 \\
MFM & 0.8334 & 0.2731 & 0.7646 & 0.5667 & 0.7860 & 0.9815 \\
OT-CFM & 0.7787 & 0.2668 & 0.8695 & 0.5636 & 0.8983 & 0.9539 \\
\midrule
\textbf{\model (ours)} & \textbf{0.6808} & \textbf{0.1962} & \textbf{0.5579} & \textbf{0.4126} & \textbf{0.5646} & \textbf{0.6830} \\
\bottomrule
\end{tabular}

\end{minipage}
\end{table}

\vspace{0.75em}
\begin{table}[hbtp]
\centering
\caption{Per-timepoint results on Schiebinger2019~\citep{schiebinger2019optimal}, holdout $t\in\{6,11,16\}$.}
\label{tab:schiebinger2019_2d_6_11_16}
\small
\setlength{\tabcolsep}{4pt}
\resizebox{1\linewidth}{!}{
\begin{tabular}{lccccccccc}
\toprule
Method & \multicolumn{3}{c}{MMD $\downarrow$} & \multicolumn{3}{c}{$\mathcal{W}_1\downarrow$} & \multicolumn{3}{c}{$\mathcal{W}_2\downarrow$} \\
\cmidrule(lr){2-4} \cmidrule(lr){5-7} \cmidrule(lr){8-10}
 & $t=6$ & $t=11$ & $t=16$ & $t=6$ & $t=11$ & $t=16$ & $t=6$ & $t=11$ & $t=16$ \\
\midrule
Action Matching & \underline{0.5056} & 0.4940 & 0.2174 & 0.3203 & 0.4982 & 0.4298 & 0.3283 & 0.5337 & 0.5642 \\
Aligned CFM & 0.5712 & 0.3405 & \textbf{0.0601} & 0.4126 & 0.3173 & 0.2228 & 0.4291 & \underline{0.3737} & \textbf{0.3764} \\
CURLY & 0.9308 & 1.0201 & 0.3553 & 1.4737 & 1.9135 & 0.7182 & 1.5113 & 1.9386 & 0.9481 \\
DMSB & 0.5081 & \underline{0.3196} & 0.0813 & \underline{0.3118} & \underline{0.3108} & \underline{0.2096} & \underline{0.3174} & 0.3760 & 0.4051 \\
MFM & 0.8676 & 0.8858 & 0.3437 & 2.6083 & 1.4602 & 0.7005 & 3.0265 & 1.4793 & 0.8768 \\
OT-CFM & 1.1550 & 1.0005 & 0.2661 & 3.1652 & 1.8933 & 0.6701 & 3.2608 & 1.9239 & 0.8856 \\
\midrule
\textbf{\model (ours)} & \textbf{0.4219} & \textbf{0.1373} & \underline{0.0660} & \textbf{0.2609} & \textbf{0.1590} & \textbf{0.2003} & \textbf{0.2669} & \textbf{0.2037} & \underline{0.3813} \\
\bottomrule
\end{tabular}
}
\end{table}

\clearpage
\paragraph{Qualitative Stage-2 trajectory visualizations.}
Figures~\ref{fig:app:stage2-ocean}--\ref{fig:app:stage2-ipsc} complement the quantitative tables with selected 2D Stage-2 rollouts. For each dataset, the trajectory panel shows the learned rollout geometry for each baseline, while the prediction panel overlays the predicted held-out marginal with the empirical held-out snapshot. These plots are qualitative diagnostics only; all models are trained from unpaired snapshots without cell identities or ground-truth trajectories.

\begin{figure}[p]
    \centering
    \includegraphics[width=\linewidth]{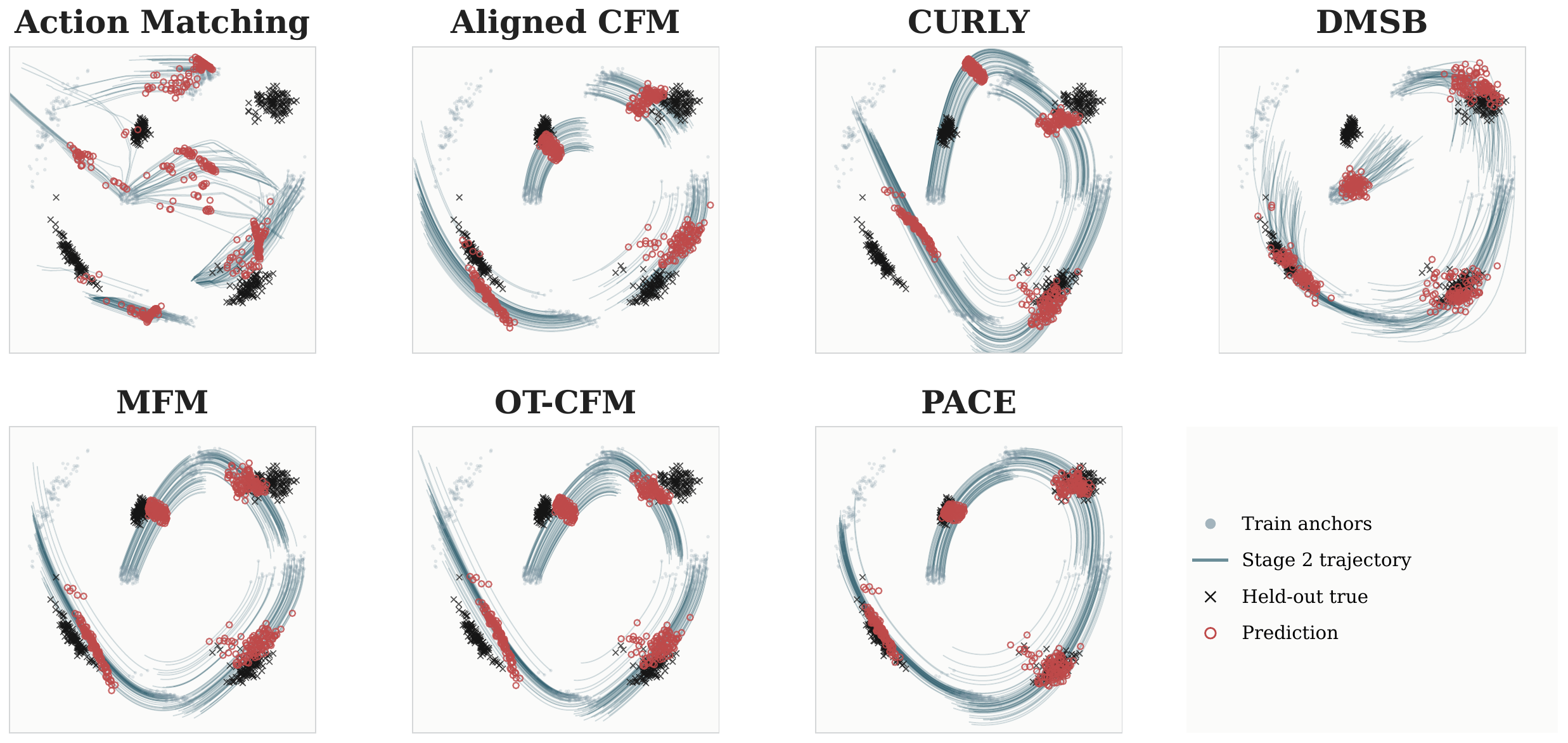}
    \vspace{0.4em}
    \includegraphics[width=\linewidth]{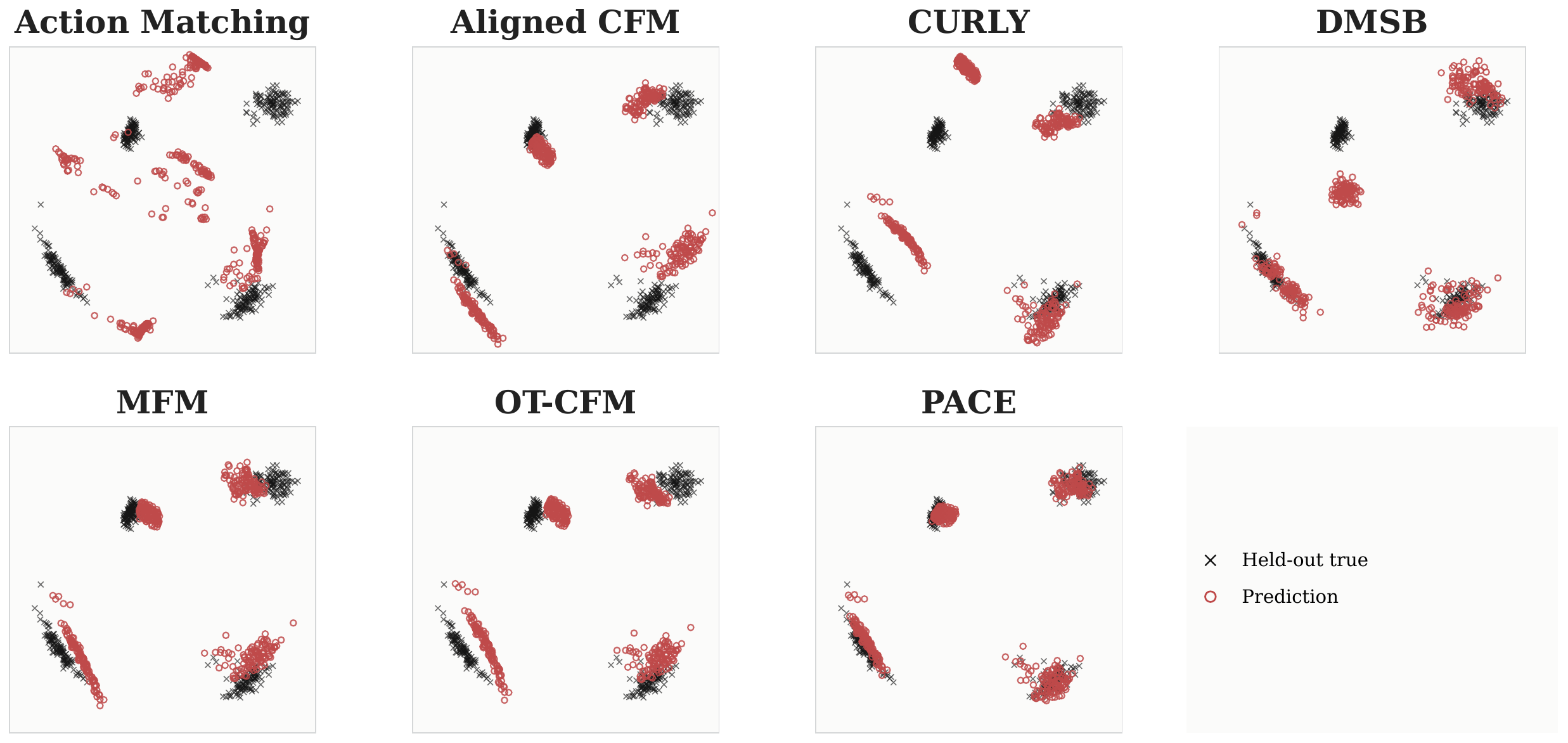}
    \caption{Qualitative Stage-2 results on Ocean~\citep{shen2025multimarginal,petrovic2026curly}. The top panel shows baseline rollouts; the bottom panel compares predicted and held-out point clouds.}
    \label{fig:app:stage2-ocean}
\end{figure}

\begin{figure}[p]
    \centering
    \includegraphics[width=\linewidth]{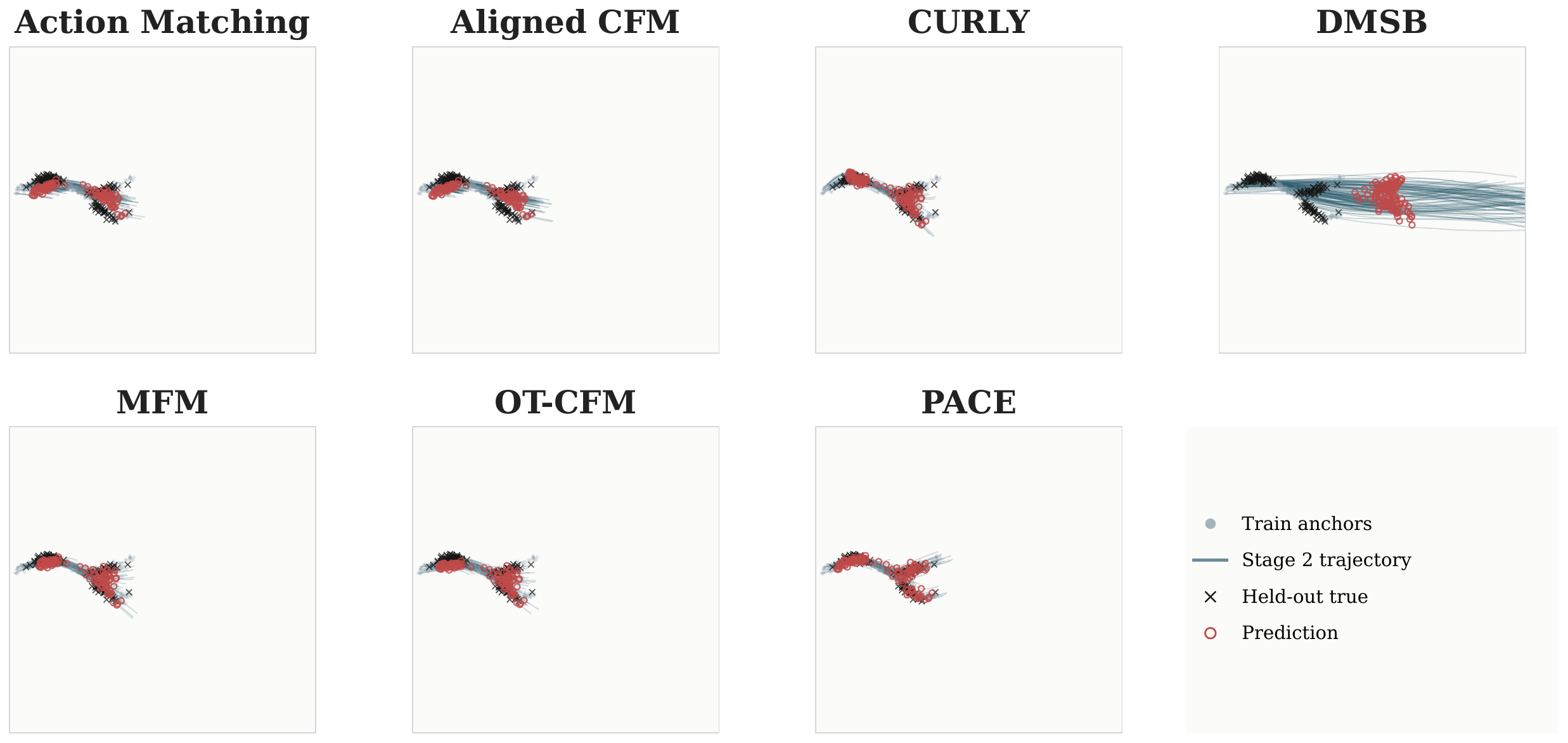}
    \vspace{0.4em}
    \includegraphics[width=\linewidth]{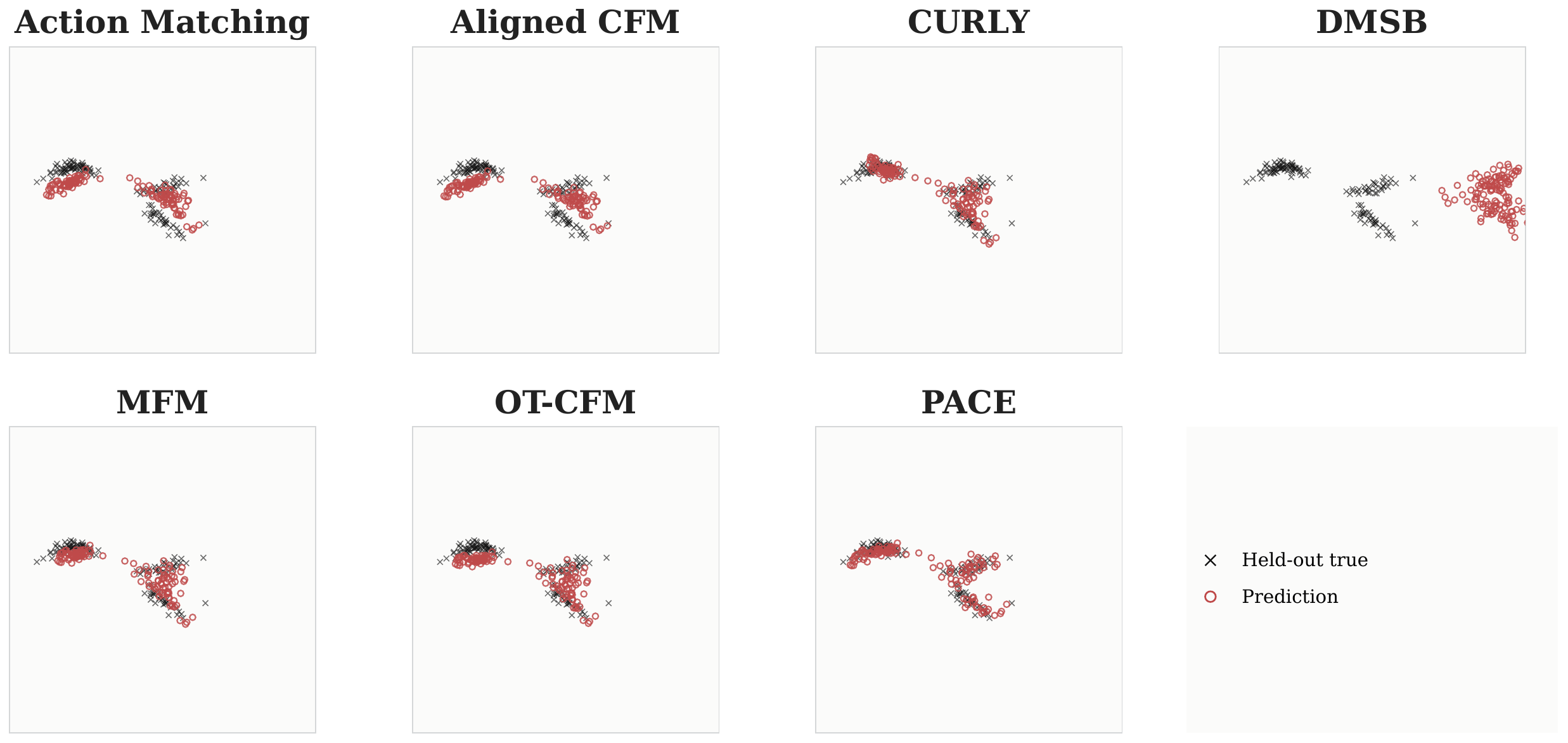}
    \caption{Qualitative Stage-2 results on the Two-Branch toy benchmark. The top panel shows baseline rollouts; the bottom panel compares predicted and held-out point clouds.}
    \label{fig:app:stage2-two-branch}
\end{figure}

\begin{figure}[p]
    \centering
    \includegraphics[width=\linewidth]{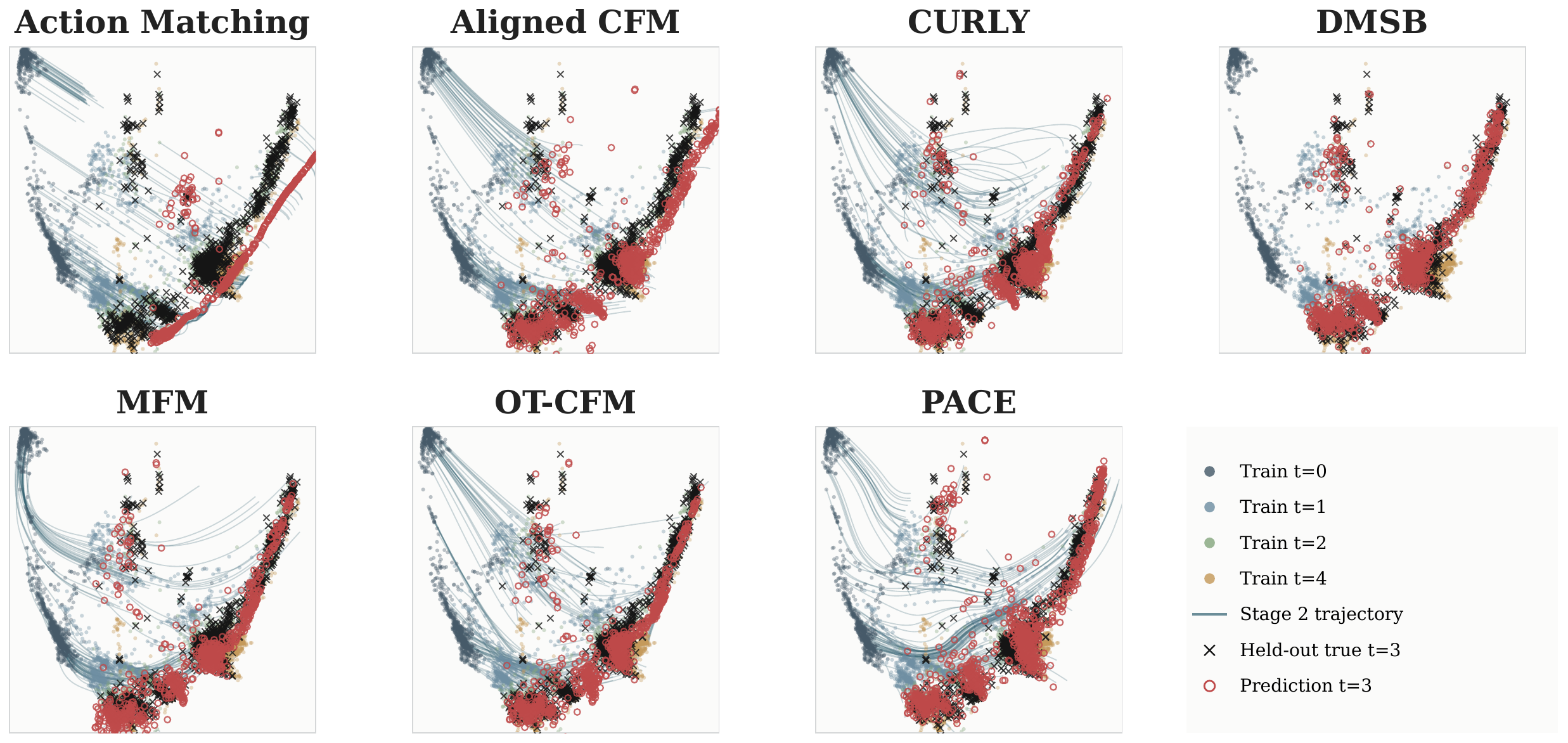}
    \vspace{0.4em}
    \includegraphics[width=\linewidth]{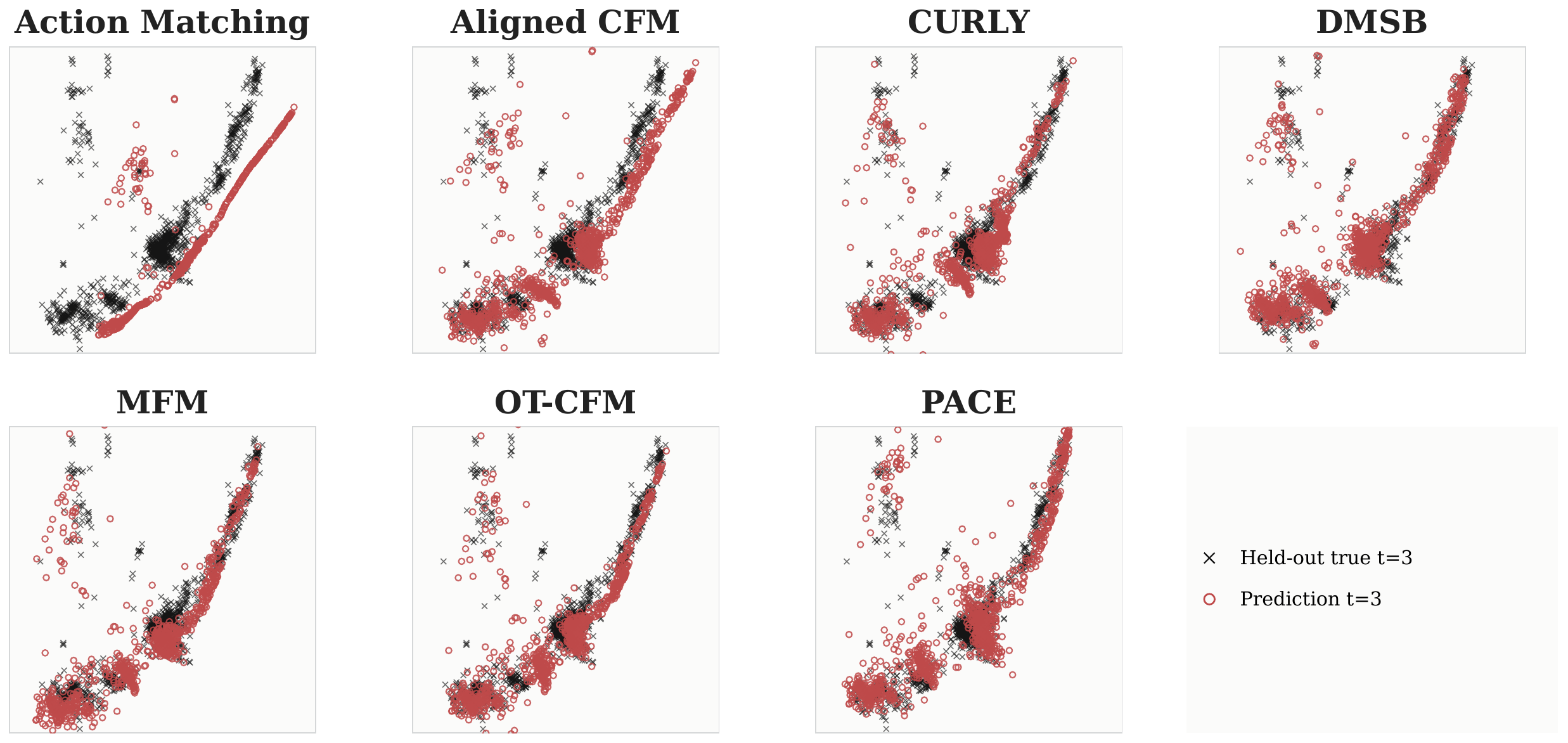}
    \caption{Qualitative Stage-2 results on EB PHATE~\citep{moon2019visualizing}. The top panel shows baseline rollouts; the bottom panel compares predicted and held-out point clouds.}
    \label{fig:app:stage2-eb}
\end{figure}

\begin{figure}[p]
    \centering
    \includegraphics[width=\linewidth]{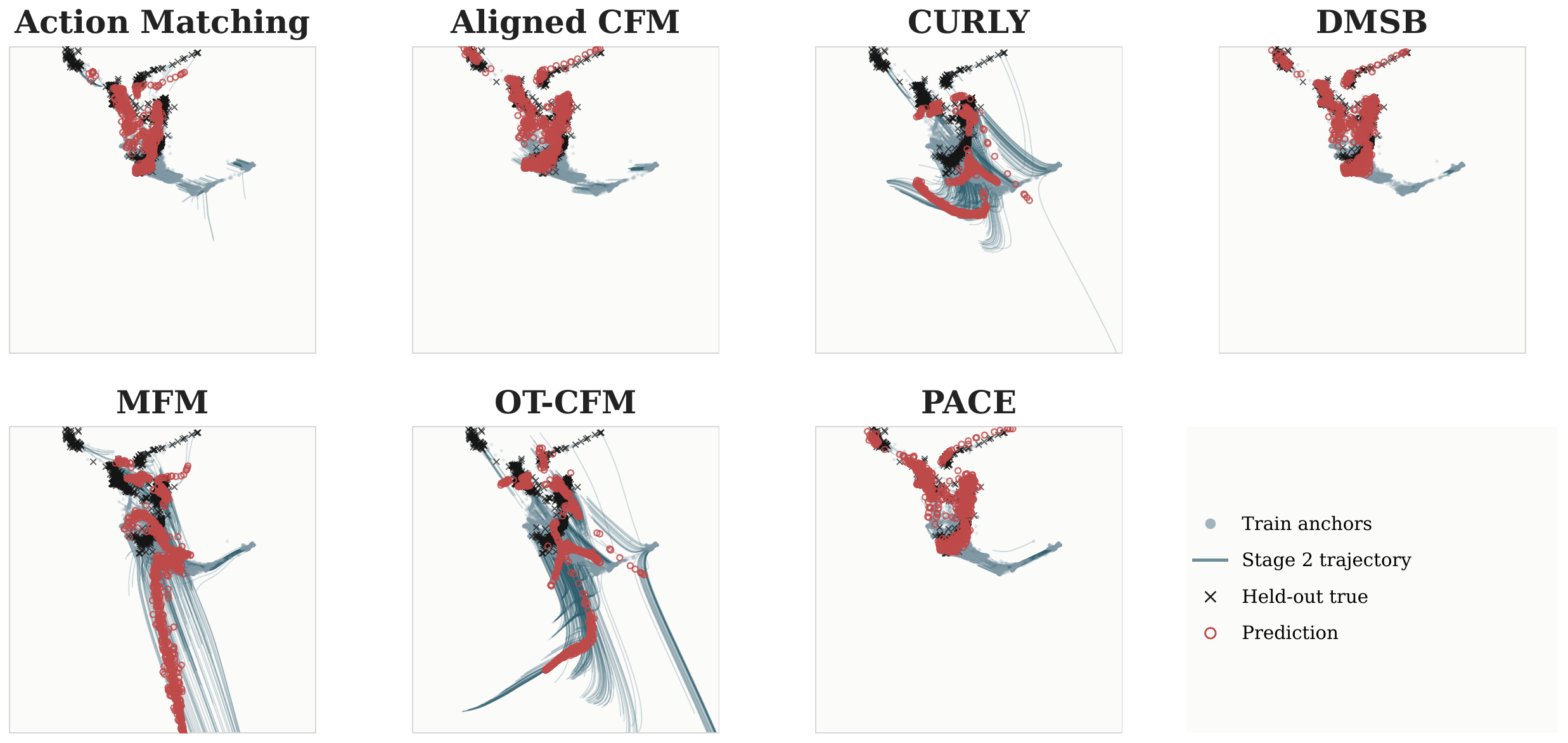}
    \vspace{0.4em}
    \includegraphics[width=\linewidth]{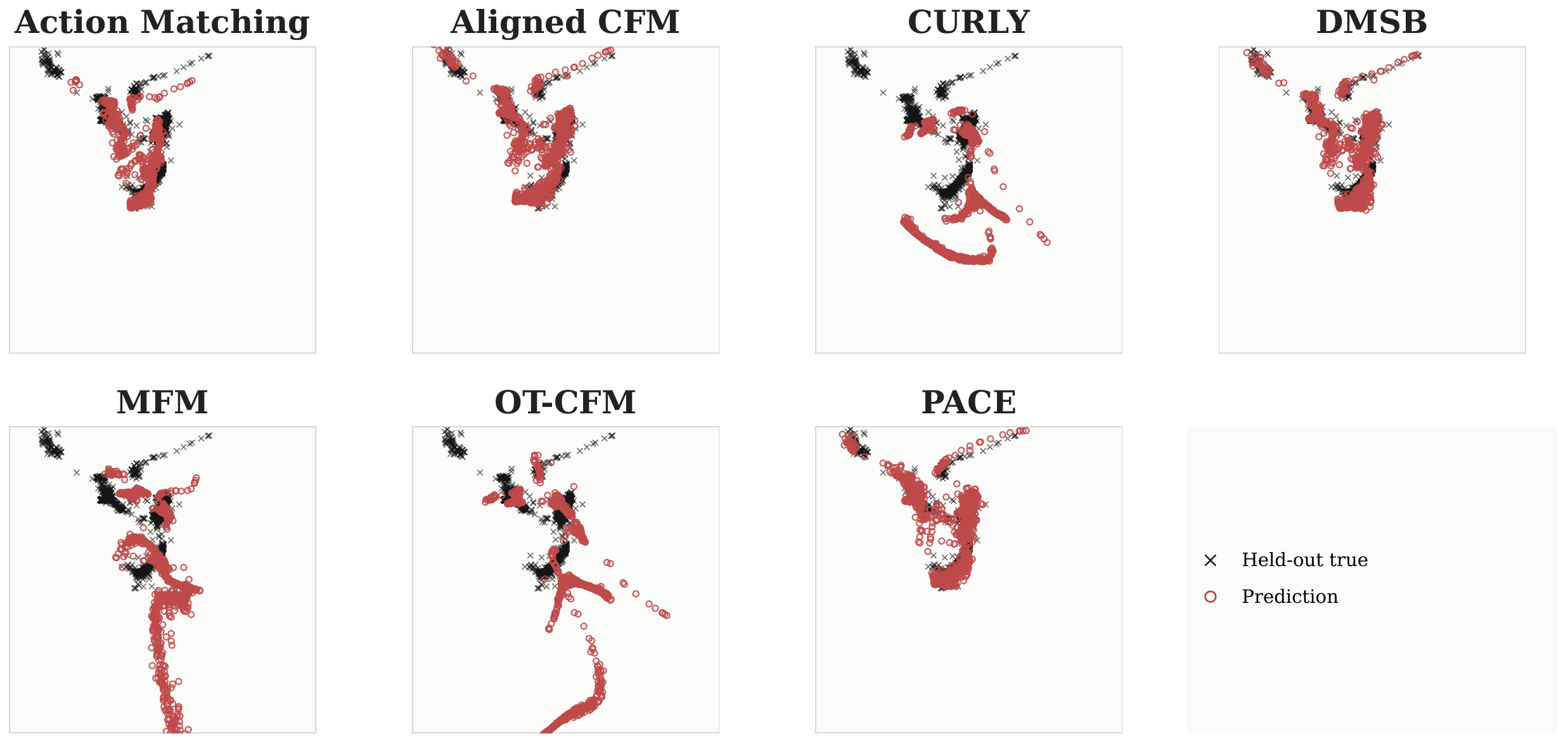}
    \caption{Qualitative Stage-2 results on Schiebinger2019~\citep{schiebinger2019optimal}. The top panel shows baseline rollouts; the bottom panel compares predicted and held-out point clouds.}
    \label{fig:app:stage2-schiebinger}
\end{figure}

\begin{figure}[p]
    \centering
    \includegraphics[width=\linewidth]{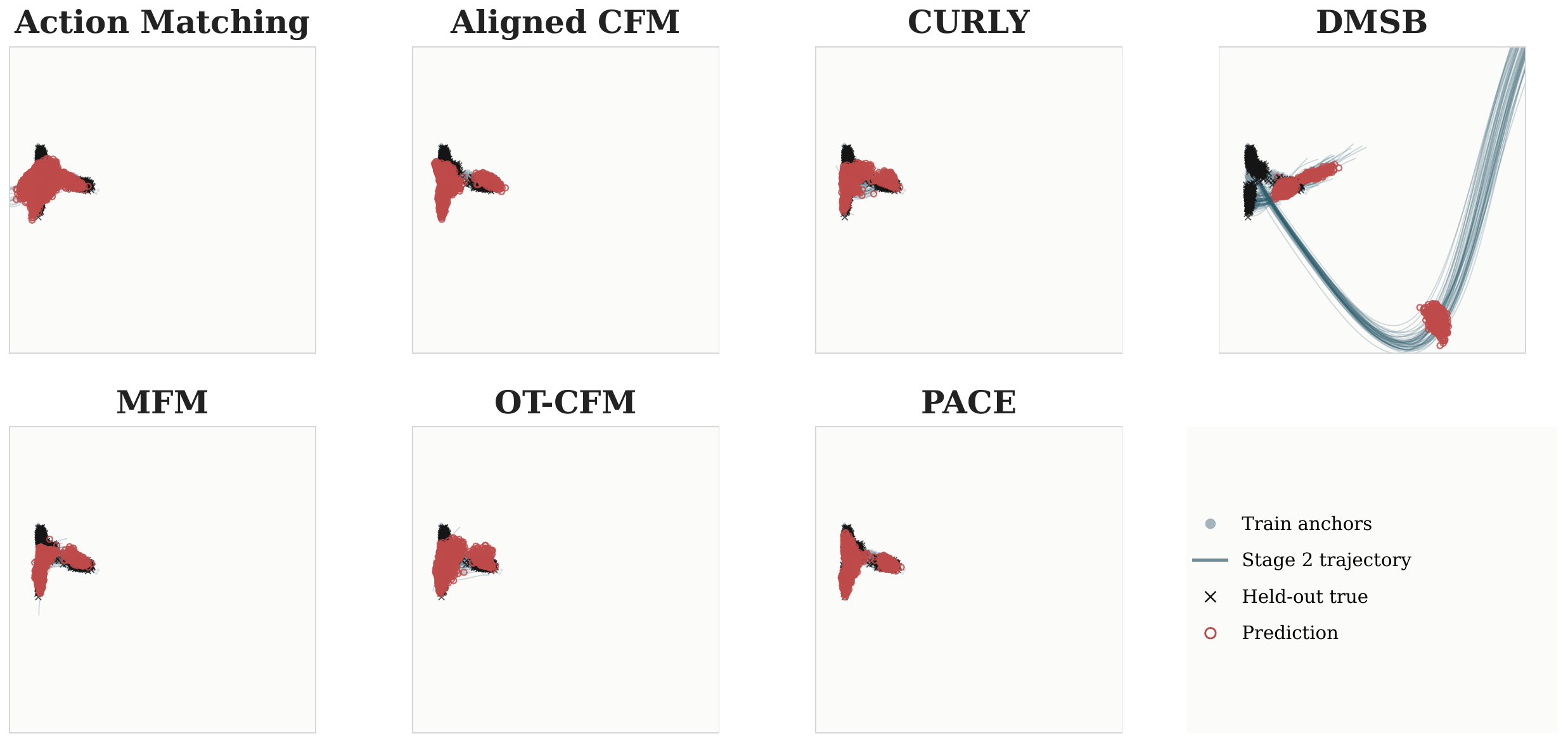}
    \vspace{0.4em}
    \includegraphics[width=\linewidth]{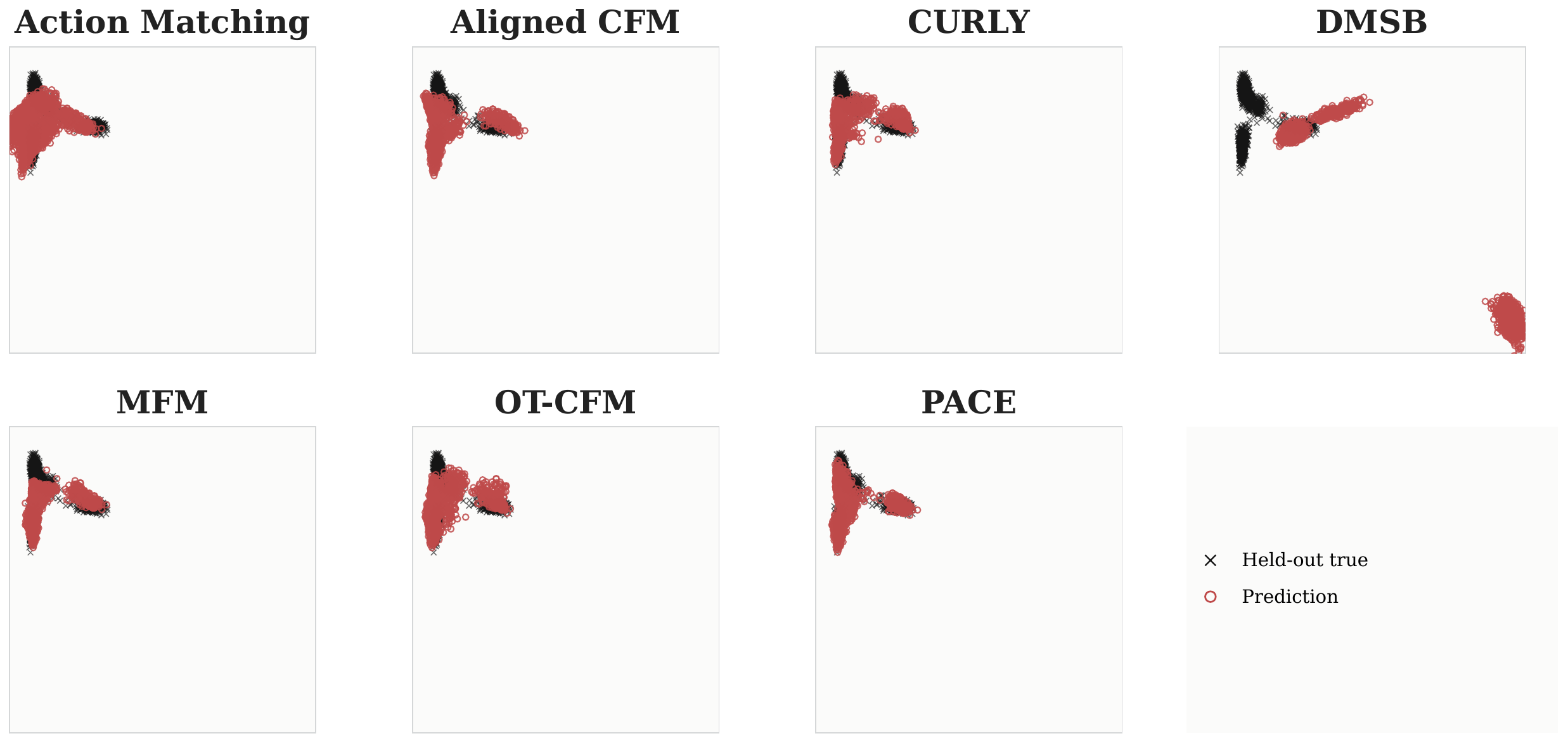}
    \caption{Qualitative Stage-2 results on iPSC-Liu~\citep{liu2020reprogramming}. The top panel shows baseline rollouts; the bottom panel compares predicted and held-out point clouds.}
    \label{fig:app:stage2-ipsc}
\end{figure}

\clearpage
\section{Baseline Details}
\label{sec:app:baselines}

The result tables include only methods that were run for the corresponding experiment. All deterministic flow baselines use the same dataloaders, train/test labels, whitening rules, and held-out ODE rollout evaluator described in Appendix~\ref{sec:app:benchmarks}. Unless an experiment-specific config overrides it, flow networks are MLP velocity fields trained with Adam/AdamW-style optimizers. The formulas below describe the implemented objectives under a segment $[t_k,t_{k+1}]$, with $\Delta t=t_{k+1}-t_k$ and normalized segment time $s=(t-t_k)/\Delta t\in[0,1]$.

\paragraph{Action Matching.}
Action Matching~\citep{neklyudov2023action} learns a scalar action network $S_\theta(t,x)$. For endpoint samples $(x_0,x_1)$ and an interpolated point $x_t=(1-s)x_0+s x_1$, the implemented segment loss has the form
\begin{equation}
\begin{aligned}
\mathcal{L}_{\mathrm{AM}}
=\mathbb{E}\big[&w(t_k)S_\theta(t_k,x_0)-w(t_{k+1})S_\theta(t_{k+1},x_1) \\
&+\Delta t\{w(t)(\partial_t S_\theta(t,x_t)+\tfrac{1}{2}\|\nabla_x S_\theta(t,x_t)\|^2)
+S_\theta(t,x_t)\partial_t w(t)\}\big].
\end{aligned}
\end{equation}
At test time, the velocity is $v_\theta(t,x)=\nabla_x S_\theta(t,x)$ and predictions are generated by solving $\dot{x}_t=v_\theta(t,x_t)$.

\paragraph{OT-CFM.}
OT-CFM~\citep{tong2023conditional} pairs mini-batches by an optimal-transport plan
\begin{equation}
  \pi_k^*\in\arg\min_{\pi\in\Pi(\hat\rho_k,\hat\rho_{k+1})}
  \mathbb{E}_{(x_0,x_1)\sim\pi}\|x_0-x_1\|^2.
\end{equation}
For paired endpoints, it uses the straight conditional path
\begin{equation}
  x_t=(1-s)x_0+s x_1,
  \qquad
  u_t=\frac{x_1-x_0}{\Delta t},
\end{equation}
and trains a velocity network with the flow-matching objective $\mathbb{E}\|v_\theta(t,x_t)-u_t\|^2$. The implementation uses the exact minibatch OT matcher from \texttt{torchcfm} and sets the conditional path noise to zero.

\paragraph{MFM.}
Metric Flow Matching~\citep{kapusniak2024metric} uses a two-stage pipeline. It first trains a geodesic correction network $\psi_\eta$ under a data-induced metric, then trains a velocity field using the corrected interpolant
\begin{equation}
  \mu_t=(1-s)x_0+s x_1+\gamma(t)\psi_\eta(x_0,x_1,t),
\end{equation}
where $\gamma(t)$ vanishes at the segment endpoints. The flow-matching target is the time derivative
\begin{equation}
  u_t=\frac{x_1-x_0}{\Delta t}+\dot{\gamma}(t)\psi_\eta(x_0,x_1,t)
  +\gamma(t)\partial_t\psi_\eta(x_0,x_1,t),
\end{equation}
with the last term present when the correction network depends explicitly on time. Our runs use the LAND-style data-manifold metric configuration.

\paragraph{CURLY.}
CURLY~\citep{petrovic2026curly} uses a related corrected interpolant, but with the normalized-time modulation $s(1-s)$:
\begin{equation}
  \mu_t=(1-s)x_0+s x_1+s(1-s)\psi_\eta(x_0,x_1,s).
\end{equation}
The target velocity used for the second-stage flow model is
\begin{equation}
  u_t=\frac{1}{\Delta t}\left[(x_1-x_0)+(1-2s)\psi_\eta
  +s(1-s)\partial_s\psi_\eta\right].
\end{equation}
The first stage fits $\psi_\eta$ using pseudo-velocity supervision estimated from the observed temporal point clouds.

\paragraph{Aligned CFM.}
Aligned CFM follows the adversarially learned interpolant approach of ALI-CFM~\citep{kviman2026multimarginal}. It trains a global interpolant $I_\eta(x_0,x_1,t)$ from the first to the last training snapshot. Intermediate marginals of $I_\eta$ are aligned with observed training snapshots by an adversarial objective of the schematic form
\begin{equation}
  \min_\eta\max_D\sum_{\ell}
  \left[\mathbb{E}_{x\sim\hat\rho_\ell}\log D_\ell(x)
  +\mathbb{E}_{x_0,x_1}\log\{1-D_\ell(I_\eta(x_0,x_1,t_\ell))\}\right].
\end{equation}
After this stage, the interpolant is frozen and a velocity model is trained with
\begin{equation}
  \mathcal{L}_{\mathrm{ALI-CFM}}
  =\mathbb{E}\|v_\theta(t,I_\eta(x_0,x_1,t))-\partial_t I_\eta(x_0,x_1,t)\|^2.
\end{equation}

\paragraph{DMSB.}
DMSB~\citep{chen2023deep} is a stochastic bridge baseline in joint position-velocity state space $z_t=(x_t,v_t)$. The implementation uses a momentum SDE discretization with learned control $a_\theta(z_t,t)$:
\begin{equation}
  z_{n+1}=z_n+\begin{bmatrix}v_n\\0\end{bmatrix}\Delta t
  +\sigma\left(\begin{bmatrix}0\\a_\theta(z_n,t_n)\end{bmatrix}\Delta t
  +\begin{bmatrix}0\\\epsilon_n\end{bmatrix}\right),
  \qquad \epsilon_n\sim\mathcal{N}(0,\Delta t I).
\end{equation}
Forward and backward policies are trained on the temporal grid, and the learned dynamics are rolled out to predict the held-out marginal distributions. DMSB appears only in the tables where this baseline was run.

\section{PACE Implementation Details}
\label{sec:app:pace-implementation}

PACE is implemented as a two-stage procedure. Stage 1 learns endpoint-preserving bridges and refines the cross-time matching; Stage 2 freezes the learned bridge correction and distills it into a global ODE velocity field for rollout.
All reported experiments were run on a single NVIDIA A100 80GB GPU.

\paragraph{Stage 1 anchor construction.}
For each experiment, the selected training frames are first grouped by time label. PACE stacks the training frames into an anchor tensor of shape $K\times N\times d$, where $K$ is the number of training labels and $N$ is the minimum selected cell count across training time points. If time points have unequal counts, each frame is trimmed to this common $N$ for the Stage 1 matching problem. This produces one adjacent matching problem per training interval.

\paragraph{Initial and refined matchings.}
The initial matching between adjacent anchors uses a cost combining normalized squared Euclidean distance with the local normal-direction penalty.
After a fixed number of bridge-training epochs, PACE recomputes the matching by evaluating the learned path-action cost and solving the induced OT problem.
We use the Python Optimal Transport (POT) package for OT computations.
Rematching is repeated at a fixed experiment-specific interval (we usually choose 10 or 20), so bridge learning and correspondence estimation alternate throughout training.

\paragraph{Bridge and flow distillation.}
For endpoints $(x_0,x_1)$ and local time $s$, the bridge has the endpoint-preserving form
\begin{equation}
  \gamma_\theta(x_0,x_1,s)
  =
  (1-s)x_0+s x_1+s(1-s)\psi_\theta(x_0,x_1,s).
\end{equation}
The bridge velocity $\partial_s\gamma_\theta$ is obtained by automatic differentiation in 2D and by a Jacobian-vector product implementation in higher dimensions. After Stage 1, $\psi_\theta$ is frozen. Stage 2 samples paired endpoints from the Stage 1 matchings and trains a global velocity network $v_\phi(t,x)$ by regressing to the bridge velocity along generated points. Held-out predictions in the tables are produced by rolling out this Stage 2 velocity field.

\section{Proof of Proposition~\ref{prop:illposed} (Ill-posedness)}
\label{app:illposed}

\begin{proof}[Proof of Proposition~\ref{prop:illposed}]
For one interval $[t_k,t_{k+1}]$, many couplings $\pi_k\in\Pi(\hat\rho_k,\hat\rho_{k+1})$ can share the same endpoint marginals $\hat\rho_k$ and $\hat\rho_{k+1}$. For any such coupling, each coupled endpoint pair $(x,y)$ can be connected by infinitely many $C^1$ paths $\gamma:[0,1]\to\mathbb{R}^d$ with $\gamma(0)=x$ and $\gamma(1)=y$. Therefore, the observed snapshots determine neither a unique coupling nor a unique intermediate trajectory.
\end{proof}

\section{Local Metric and Correspondence}
\label{app:metric-correspondence}

The metric used by PACE changes the matching criterion from endpoint proximity to path plausibility. In the locally constant idealization, suppose
\[
  G=I+\alpha P_N,\qquad \alpha>0,
\]
where $P_N$ is the orthogonal projector onto the local normal subspace. For a straight candidate displacement $d=y-x$, the metric action is
\begin{equation}
  d^\top Gd
  =
  \|d\|^2+\alpha\|P_Nd\|^2.
\end{equation}
Thus, two candidate matches with the same Euclidean distance are not equivalent: the one whose displacement is more tangent-aligned has lower action. The implemented initial matching uses this principle through a normal-projection penalty, while the refined matching applies the same idea to the learned curved bridge by averaging
\begin{equation}
  \dot\gamma_\theta(s)^\top
  G(\gamma_\theta(s),t(s))
  \dot\gamma_\theta(s)
\end{equation}
over a finite probe grid. This is the practical reason PACE can prefer a slightly longer but tangent-compatible path over a shorter chord that cuts across the inferred developmental geometry.

\section{Monotonicity of Ideal Alternating Updates}
\label{app:optimization-monotonicity}

The alternating bridge-coupling updates can be viewed as block-coordinate descent for the discretized metric-action objective. For each adjacent time interval \(k\), let
\[
\Pi_k=\Pi(\hat\rho_k,\hat\rho_{k+1})
\]
denote the feasible set of empirical couplings, and define the finite-dimensional objective
\begin{equation}
\label{eq:app-joint-action}
\mathcal{J}(\theta,\{\pi_k\}_{k=0}^{K-1})
=
\sum_{k=0}^{K-1}
\sum_{i,j}
\pi_{ij}^{(k)}
c_{ij}^{\mathrm{path}}(\theta),
\end{equation}
where \(c_{ij}^{\mathrm{path}}(\theta)\) is the discretized path-action cost in Eq.~\eqref{eq:path-cost}. This objective is the full-batch version of the metric-action term optimized during the bridge update, using the same path-action quadrature as the coupling update.

\begin{proposition}[Monotonicity under exact block updates]
\label{prop:ideal-bcd-monotonicity}
Assume that, at iteration \(r\), the bridge update computes a global minimizer
\[
\theta^{r+1}\in
\arg\min_{\theta}
\mathcal{J}(\theta,\{\pi_k^r\}_{k=0}^{K-1}),
\]
and the coupling update computes, for each \(k\), a global minimizer
\[
\pi_k^{r+1}
\in
\arg\min_{\pi_k\in\Pi_k}
\sum_{i,j}
\pi_{ij}^{(k)}
c_{ij}^{\mathrm{path}}(\theta^{r+1}).
\]
Then the joint metric-action objective is non-increasing:
\begin{equation}
\mathcal{J}(\theta^{r+1},\{\pi_k^{r+1}\}_{k=0}^{K-1})
\le
\mathcal{J}(\theta^{r},\{\pi_k^{r}\}_{k=0}^{K-1}).
\end{equation}
Moreover, the decrease is strict whenever either the bridge block or at least one coupling block achieves a strict improvement over the previous iterate.
\end{proposition}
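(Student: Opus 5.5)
The argument is the standard two-step sandwich for block-coordinate descent, applied to the joint objective \(\mathcal{J}\) in Eq.~\eqref{eq:app-joint-action}. I would insert the intermediate iterate \((\theta^{r+1},\{\pi_k^r\})\) and compare the objective at three points:
\[
\mathcal{J}(\theta^{r+1},\{\pi_k^{r+1}\})
\le
\mathcal{J}(\theta^{r+1},\{\pi_k^{r}\})
\le
\mathcal{J}(\theta^{r},\{\pi_k^{r}\}).
\]
Chaining the two inequalities gives the claim.

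\textbf{Second inequality (bridge block).} Since \(\theta^{r+1}\) is a global minimizer of \(\theta\mapsto\mathcal{J}(\theta,\{\pi_k^r\})\), its value is at most the value at any other \(\theta\). In particular it is at most the value at the previous parameter \(\theta^r\).

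\textbf{First inequality (coupling block).} For fixed \(\theta^{r+1}\), the objective separates across intervals:
\[
\mathcal{J}(\theta^{r+1},\{\pi_k\})=\sum_k F_k(\pi_k),
\qquad
F_k(\pi_k)=\sum_{i,j}\pi^{(k)}_{ij}c^{\mathrm{path}}_{ij}(\theta^{r+1}).
\]
Each \(F_k\) depends only on \(\pi_k\), and the feasible set is the product \(\prod_k\Pi_k\). The old coupling \(\pi_k^r\) is feasible in \(\Pi_k\), so global optimality of \(\pi_k^{r+1}\) gives \(F_k(\pi_k^{r+1})\le F_k(\pi_k^r)\) for every \(k\). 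Summing over \(k\) yields the first inequality.

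One point needs checking here. The minimizers should exist, or at least be assumed, as the statement does. The cost \(c^{\mathrm{path}}_{ij}(\theta)\) is a finite quadrature of nonnegative quantities: by Proposition~\ref{prop:metric}, \(u^\top G_k u\ge\|u\|^2\ge0\). So \(\mathcal{J}\) is well defined, finite and bounded below. The coupling problem is a linear program over a compact polytope, so its minimizer exists automatically.

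\textbf{Strictness.} If the bridge step is strict, the second inequality is strict and the chain is strict. If some coupling block satisfies \(F_k(\pi_k^{r+1})<F_k(\pi_k^r)\), the corresponding term drops strictly while all other terms are non-increasing. The sum therefore drops strictly, and so does the chain.

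I do not expect a real obstacle. The only subtlety is the interpretation of "strict improvement over the previous iterate". It must be read relative to the intermediate point: the coupling comparison is at fixed \(\theta^{r+1}\), not at \(\theta^r\). I would state this explicitly.

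As a corollary worth noting, the sequence \(\mathcal{J}^r\) is non-increasing and bounded below by \(0\), so it converges. This holds only for the objective values; nothing is claimed about convergence of the iterates themselves.
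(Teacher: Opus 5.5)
Your proposal is correct and follows the paper's proof essentially verbatim: the paper also inserts the intermediate point $(\theta^{r+1},\{\pi_k^r\})$, uses global optimality of the bridge update for one inequality and per-interval coupling optimality summed over $k$ for the other, and gets strictness whenever either inequality is strict. Your extra remarks on existence of the LP minimizer, nonnegativity of the cost, and convergence of the objective values are harmless additions; the paper makes the last two points in the remark after its proof.
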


\begin{proof}
By the optimality of the bridge update with \(\{\pi_k^r\}\) fixed,
\[
\mathcal{J}(\theta^{r+1},\{\pi_k^r\}_{k=0}^{K-1})
\le
\mathcal{J}(\theta^{r},\{\pi_k^r\}_{k=0}^{K-1}).
\]
By the optimality of each coupling update with \(\theta^{r+1}\) fixed,
\[
\sum_{i,j}\pi_{ij}^{(k),r+1}c_{ij}^{\mathrm{path}}(\theta^{r+1})
\le
\sum_{i,j}\pi_{ij}^{(k),r}c_{ij}^{\mathrm{path}}(\theta^{r+1})
\quad
\text{for every }k.
\]
Summing these inequalities over \(k\) gives
\[
\mathcal{J}(\theta^{r+1},\{\pi_k^{r+1}\}_{k=0}^{K-1})
\le
\mathcal{J}(\theta^{r+1},\{\pi_k^r\}_{k=0}^{K-1}).
\]
Combining the two inequalities proves monotonicity. If either inequality is strict, the combined decrease is strict.
\end{proof}

Because \(G_k(x,t)=I+\alpha C_N^{(k)}(x,t)\) is positive definite, every path-action cost is nonnegative; therefore the sequence of ideal objective values is bounded below by zero and hence converges as a sequence of numbers. This does not imply convergence to a global optimum, nor does it imply uniqueness of the bridge or coupling. Equality can occur when a block update returns an equivalent minimizer or when the current block is already optimal.

\paragraph{Relation to the implemented algorithm.}
The proposition applies to the ideal metric-action problem in Eq.~\eqref{eq:app-joint-action}. The implemented PACE training approximates this scheme with stochastic minibatches, a neural bridge optimized by finite gradient steps, periodic rather than continuous rematching, and optional stabilizing regularizers. These choices are used for scalability and numerical stability, but they do not provide a per-gradient-step strict decrease guarantee for the realized neural training trajectory. The monotonic result should therefore be read as the optimization principle behind the alternating updates, while the empirical sections evaluate the behavior of the practical stochastic implementation.

\section{Stage 1 Regularizers}
\label{app:regularizers}

PACE Stage 1 optimizes a metric action together with an optional regularization term:
\begin{equation}
  \mathcal{L}_{\mathrm{bridge}}
  =
  \lambda_{\mathrm{metric}}\mathcal{L}_{\mathrm{metric}}
  +
  \lambda_{\mathrm{reg}}\mathcal{L}_{\mathrm{reg}}.
\end{equation}
The metric-action term is the discretized version of Eq.~\eqref{eq:metric-loss}. The regularization term can include the implementation-level stabilizers below; in our experiments, this corresponds to a weighted combination
\[
  \lambda_{\mathrm{reg}}\mathcal{L}_{\mathrm{reg}}
  =
  \lambda_{\mathrm{coh}}\mathcal{L}_{\mathrm{coh}}
  +
  \lambda_{\mathrm{orth}}\mathcal{L}_{\mathrm{orth}}.
\]

\paragraph{Cross-segment velocity coherence.}
Let $(x_a,t_a,v_a)$ and $(x_b,t_b,v_b)$ denote generated bridge points and velocities from different adjacent segments. PACE constructs Gaussian space-time weights
\begin{equation}
  W_{ab}
  =
  \exp\!\left(-\frac{\|x_a-x_b\|^2}{\sigma_{x,a}\sigma_{x,b}}\right)
  \exp\!\left(-\frac{|t_a-t_b|^2}{\sigma_{t,a}\sigma_{t,b}}\right)
  \mathbf{1}\{\mathrm{seg}(a)\neq\mathrm{seg}(b)\},
\end{equation}
where the bandwidths are estimated from local projector variation. The coherence loss penalizes nearby generated points from different segments when their normalized velocities disagree:
\begin{equation}
  \mathcal{L}_{\mathrm{coh}}
  =
  \frac{\sum_{a,b}W_{ab}\left(1-\hat v_a^\top \hat v_b\right)}
  {\sum_{a,b}W_{ab}+\varepsilon},
  \qquad
  \hat v_a=\frac{v_a}{\|v_a\|+\varepsilon}.
\end{equation}

\paragraph{Normal-motion suppression.}
For generated bridge points, PACE estimates a local normal projector and penalizes the normal component of the generated velocity:
\begin{equation}
  \mathcal{L}_{\mathrm{orth}}
  =
  \frac{1}{|\mathcal{G}|}
  \sum_{(x,t,v)\in\mathcal{G}}
  \|P_N(x,t)v\|^2.
\end{equation}
In 2D this reduces to a squared dot product with the locally estimated normal vector. In higher dimensions it uses the full normal-space projector $P_N=I-P_T$.

\section{High-Dimensional Concentration Effects}
\label{sec:app:concentration}

The high-dimensional experiments in Section~\ref{sec:experiment} operate in a regime where Euclidean probability mass and pairwise costs are strongly concentrated. This section explains the diagnostics used in Figure~\ref{fig:norm-time-concentration-vs-dim} and why concentration affects OT, MMD, and nearest-neighbor-based trajectory objectives.

\paragraph{Norm concentration.}
For \(X\sim\mathcal{N}(0,I_d)\), the map \(f(x)=\|x\|_2\) is \(1\)-Lipschitz. The Gaussian concentration inequality therefore gives
\begin{equation}
  \Pr\!\left(\left|\|X\|_2-\mathbb{E}\|X\|_2\right|\ge t\right)
  \le 2\exp(-t^2/2),
\end{equation}
and \(\mathbb{E}\|X\|_2\asymp \sqrt{d}\) implies
\begin{equation}
  \frac{\operatorname{Std}(\|X\|_2)}{\mathbb{E}\|X\|_2}
  = O(d^{-1/2}).
\end{equation}
This is the standard Lipschitz concentration phenomenon~\citep{ledoux2001concentration}: after whitening or PCA scaling, most points are pushed toward a thin shell rather than spreading across many radial scales.

\paragraph{Direction and nearest-neighbor concentration.}
For independent isotropic sub-Gaussian vectors \(X,Y\in\mathbb{R}^d\), normalized directions satisfy
\begin{equation}
  \left\langle \frac{X}{\|X\|_2},\frac{Y}{\|Y\|_2}\right\rangle
  = O_p(d^{-1/2}),
\end{equation}
so random directions become nearly orthogonal as \(d\) grows~\citep{vershynin2018high}. A related nearest-neighbor result states that if the relative variance of distances vanishes,
\begin{equation}
  \frac{\operatorname{Var}(\|Q-X_1\|_2)}
       {\mathbb{E}[\|Q-X_1\|_2]^2}
  \to 0,
\end{equation}
then
\begin{equation}
  \Pr\!\left(
  \frac{D_{\max}^{(d)}-D_{\min}^{(d)}}{D_{\min}^{(d)}}\le \epsilon
  \right)\to 1
  \qquad \text{for every }\epsilon>0,
\end{equation}
where \(D_{\min}^{(d)}\) and \(D_{\max}^{(d)}\) are the nearest and farthest distances from the query point~\citep{beyer1999nearest}. Thus, when distance contrast collapses, the nearest neighbor and farthest neighbor become less distinguishable in relative terms.

\paragraph{Effect on OT, MMD, and flow-matching losses.}
If two approximately isotropic samples \(X,Y\in\mathbb{R}^d\) have coordinate-wise fluctuations with comparable scale, then
\begin{equation}
  \|X-Y\|_2^2 = \sum_{j=1}^{d} (X_j-Y_j)^2
\end{equation}
is a sum of many coordinate-level contributions. Under standard independence or weak-dependence assumptions, a law-of-large-numbers or sub-Gaussian concentration argument implies that \(\|X-Y\|_2^2/d\) concentrates around its mean, with relative fluctuations that typically scale as \(O(d^{-1/2})\). For transport objectives,
\begin{equation}
  W_2^2(\mu,\nu)
  = \inf_{\pi\in\Pi(\mu,\nu)}
    \mathbb{E}_{(x,y)\sim\pi}\|x-y\|_2^2,
\end{equation}
so concentration directly reduces the dynamic range of the cost matrix used by OT and flow-matching couplings. For kernel metrics,
\begin{equation}
  \operatorname{MMD}^2(\mu,\nu)
  =
  \mathbb{E}\big[k(x,x')+k(y,y')-2k(x,y)\big],
\end{equation}
and a distance-based kernel such as \(k(x,y)=\exp(-\|x-y\|_2^2/(2\sigma^2))\) also loses contrast when most \(\|x-y\|_2^2\) values occupy a narrow interval. Consequently, different predicted distributions can receive similar numerical losses even when their local geometry differs. This is the metric-degeneration issue encountered by high-dimensional distance-based trajectory methods, including metric and OT flow-matching objectives~\citep{kapusniak2024metric}.

\paragraph{Reference thresholds in Figure~\ref{fig:norm-time-concentration-vs-dim}.}
The first diagnostic is
\begin{equation}
  \operatorname{CV}_{\rm norm}
  =
  \frac{\operatorname{Std}(\|X\|_2)}{\mathbb{E}\|X\|_2}.
\end{equation}
The horizontal reference level \(0.3\) is a practical warning threshold: below it, one standard deviation of radial variation is less than \(30\%\) of the mean radius, so most cells lie in a relatively thin shell and pairwise Euclidean costs are dominated by small angular or local fluctuations. This threshold is not a theorem-specific cutoff; it is a scale marker chosen to make the \(O(d^{-1/2})\) collapse visible on the empirical PCA representations.

The second diagnostic compares time separation with within-time dispersion. For snapshot centers \(c_i,c_j\), define
\begin{equation}
  R_{\rm time}
  =
  \operatorname{median}_{i<j}
  \frac{\|c_i-c_j\|_2}{\bar r_{ij}},
  \qquad
  \bar r_{ij}
  =
  \frac{1}{2}
  \left(
  \mathbb{E}_{X\sim \mu_i}\|X-c_i\|_2
  +
  \mathbb{E}_{Y\sim \mu_j}\|Y-c_j\|_2
  \right).
\end{equation}
The reference level \(1.0\) marks the point where the median displacement between time-point centers is comparable to the typical within-time radius. When \(R_{\rm time}\le 1\), the Euclidean shift between snapshots is no larger than the spread of the snapshots themselves, so time labels are difficult to separate using raw pairwise distances alone. Figure~\ref{fig:norm-time-concentration-vs-dim} summarizes these diagnostics in the main text. The appendix diagnostics in Figures~\ref{fig:app:ipsc_dim} and~\ref{fig:app:ipsc} show the corresponding norm-concentration behavior for iPSC-Liu, OP-Cite, and OP-Multi.

\paragraph{High-dimensional benchmark results.}
Table~\ref{tab:ipsc_liu_dim10_50} reports the per-timepoint reconstruction metrics for iPSC-Liu~\citep{liu2020reprogramming} in 10D and 50D PCA representations, evaluated on holdout time points $t\in\{4,8,12,16\}$.
Table~\ref{tab:op_100d} gives the corresponding results for the multimodal OP-Cite and OP-Multi datasets~\citep{lance2022multimodal} in 100D.
These tables complement the time-averaged summary in Table~\ref{tab:ipsc_liu_dim10_50_mean} and the concentration diagnostics in Figure~\ref{fig:app:ipsc_dim}.

\begin{table}[t]
\centering
\caption{iPSC-Liu~\citep{liu2020reprogramming} (10D/50D) per-timepoint results across holdouts $t\in\{4,8,12,16\}$.}
\label{tab:ipsc_liu_dim10_50}
\small
\setlength{\tabcolsep}{3pt}
\resizebox{1\linewidth}{!}{
\begin{tabular}{llcccccccccccc}
\toprule
Dim & Method & \multicolumn{4}{c}{MMD $\downarrow$} & \multicolumn{4}{c}{$\mathcal{W}_1\downarrow$} & \multicolumn{4}{c}{$\mathcal{W}_2\downarrow$} \\
\cmidrule(lr){3-6} \cmidrule(lr){7-10} \cmidrule(lr){11-14}
 & & $t=4$ & $t=8$ & $t=12$ & $t=16$ & $t=4$ & $t=8$ & $t=12$ & $t=16$ & $t=4$ & $t=8$ & $t=12$ & $t=16$ \\
\midrule
10D & Action Matching & 0.6504 & 0.6959 & 0.5132 & 0.3626 & 3.8458 & 4.1608 & 3.1014 & 4.0123 & 4.2076 & 5.9396 & 3.4279 & 3.9222 \\
 & Aligned CFM & 0.6245 & \underline{0.5936} & \underline{0.4051} & \textbf{0.2345} & 3.7937 & \underline{3.4433} & \underline{2.5372} & \textbf{1.9877} & 4.2084 & \underline{4.0761} & \underline{2.9973} & \textbf{2.2482} \\
 & CURLY & \underline{0.5461} & 0.6054 & 0.4936 & 0.5024 & 3.8694 & 3.9773 & 3.1387 & 3.4660 & 4.3490 & 5.5607 & 3.4974 & 3.4336 \\
 & DMSB & 0.9250 & 0.7057 & 0.7448 & 0.7098 & 10.5837 & 4.1953 & 5.5040 & 4.9522 & 10.6553 & 4.5341 & 5.6287 & 5.0190 \\
 & MFM & 0.5864 & 0.6674 & 0.5033 & 0.3521 & \underline{3.5972} & 4.0239 & 3.2515 & 2.7539 & \underline{4.0205} & 5.0654 & 3.5968 & 2.6699 \\
 & OT-CFM & 0.6039 & 0.6339 & 0.4951 & 0.3251 & 3.6543 & 4.0040 & 3.2932 & 2.5884 & 4.0624 & 5.5807 & 3.6325 & \underline{2.5030} \\
\cmidrule(lr){2-14}
 & \textbf{\model (ours)} & \textbf{0.5244} & \textbf{0.5153} & \textbf{0.3113} & \underline{0.2871} & \textbf{3.5918} & \textbf{3.2507} & \textbf{2.3185} & \underline{2.5294} & \textbf{3.7485} & \textbf{3.5755} & \textbf{2.7868} & 2.6406 \\
\midrule
50D & Action Matching & \textbf{0.3517} & 0.3506 & 0.3278 & 0.2197 & 7.2261 & 8.4554 & 7.9593 & 6.8096 & 7.8273 & 10.3927 & 8.5704 & 7.1567 \\
 & Aligned CFM & \underline{0.3519} & \textbf{0.2917} & 0.2397 & \underline{0.1663} & \underline{7.0562} & \underline{7.5723} & \underline{7.5947} & 7.0302 & \underline{7.6855} & \underline{8.2791} & \underline{8.2483} & 8.0080 \\
 & CURLY & 0.3824 & 0.4189 & \underline{0.2235} & 0.2327 & 8.2766 & 9.7010 & 7.6085 & 6.8715 & 8.8476 & 11.2946 & 8.2990 & 7.1972 \\
 & DMSB & 0.7593 & 0.5730 & 0.4787 & 0.4441 & 14.5899 & 10.4588 & 9.7531 & 8.9456 & 14.7121 & 10.7933 & 10.1192 & 9.3447 \\
 & MFM & 0.3539 & 0.3467 & 0.2679 & 0.1728 & \textbf{7.0197} & 9.2505 & 8.5746 & \underline{6.3246} & \textbf{7.6665} & 10.5394 & 9.2502 & \textbf{6.7254} \\
 & OT-CFM & 0.3530 & 0.3458 & 0.2802 & 0.1803 & 7.1278 & 9.0608 & 8.7395 & 6.4288 & 7.7653 & 10.4423 & 9.3687 & 6.8054 \\
\cmidrule(lr){2-14}
 & \textbf{\model (ours)} & 0.3609 & \underline{0.3004} & \textbf{0.2193} & \textbf{0.1396} & 7.0656 & \textbf{6.8194} & \textbf{6.8718} & \textbf{6.2581} & 7.6896 & \textbf{7.0798} & \textbf{7.5736} & \underline{6.7914} \\
\bottomrule
\end{tabular}
}
\end{table}

\begin{table}[t]
\centering
\caption{OP-Cite and OP-Multi~\citep{lance2022multimodal} (100D) results.}
\label{tab:op_100d}
\small
\setlength{\tabcolsep}{4pt}
\resizebox{0.75\linewidth}{!}{
\begin{tabular}{lcccccc}
\toprule
Method & \multicolumn{3}{c}{\textsc{OP-Cite} (100D)} & \multicolumn{3}{c}{\textsc{OP-Multi} (100D)} \\
\cmidrule(lr){2-4} \cmidrule(lr){5-7}
 & MMD $\downarrow$ & $\mathcal{W}_1\downarrow$ & $\mathcal{W}_2\downarrow$ & MMD $\downarrow$ & $\mathcal{W}_1\downarrow$ & $\mathcal{W}_2\downarrow$ \\
\midrule
Action Matching & 0.1753 & \textbf{9.9388} & \textbf{10.0448} & \textbf{0.1622} & 10.7711 & 10.8160 \\
Aligned CFM & \underline{0.1434} & 10.7009 & 10.8094 & 0.1720 & 10.8330 & 10.8826 \\
CURLY & 0.2529 & 11.5163 & 11.6015 & 0.2786 & 12.0015 & 12.0440 \\
MFM & 0.1435 & \underline{10.4301} & \underline{10.5409} & 0.1699 & 10.6274 & 10.6702 \\
OT-CFM & 0.1448 & 10.4447 & 10.5491 & \underline{0.1662} & \underline{10.6128} & \underline{10.6564} \\
\midrule
\textbf{\model (ours)} & \textbf{0.1409} & 10.4574 & 10.5605 & 0.1693 & \textbf{10.5824} & \textbf{10.6293} \\
\bottomrule
\end{tabular}
}
\end{table}

\begin{figure}[hbtp]
    \centering
    \includegraphics[width=1.0\linewidth]{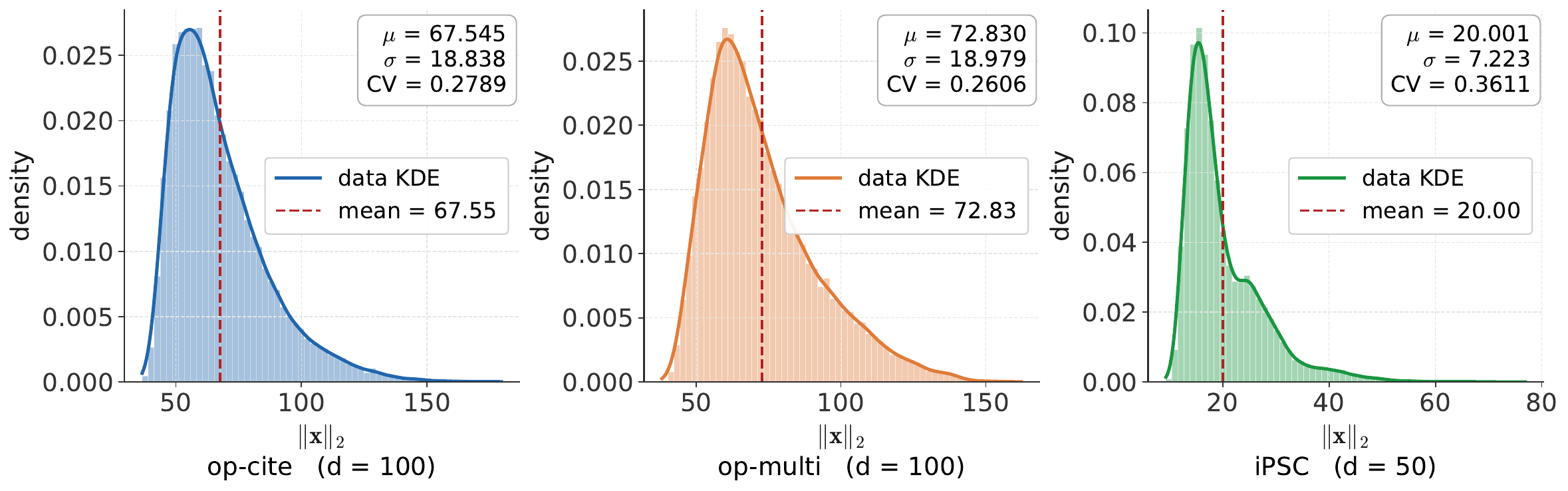}
    \caption{Norm-concentration diagnostics for iPSC-Liu~\citep{liu2020reprogramming} and OP-Cite/OP-Multi~\citep{lance2022multimodal} representations. The panels illustrate the empirical thin-shell behavior summarized by \(\operatorname{CV}_{\rm norm}\), with the \(0.3\) reference level used as a practical concentration warning threshold.}
    \label{fig:app:ipsc_dim}
\end{figure}

\section{iPSC-Liu Results}
\label{sec:app:ipsc}

Figure~\ref{fig:app:ipsc} shows the norm-concentration diagnostics for iPSC-Liu across increasing PCA dimensions, illustrating the $O(d^{-1/2})$ norm-concentration behavior discussed in Appendix~\ref{sec:app:concentration}.
These panels correspond to the high-dimensional benchmark experiments reported in Table~\ref{tab:ipsc_liu_dim10_50} and summarized in the main text in Table~\ref{tab:ipsc_liu_dim10_50_mean}.

\begin{figure}[hbtp]
    \centering
    \includegraphics[width=1.0\linewidth]{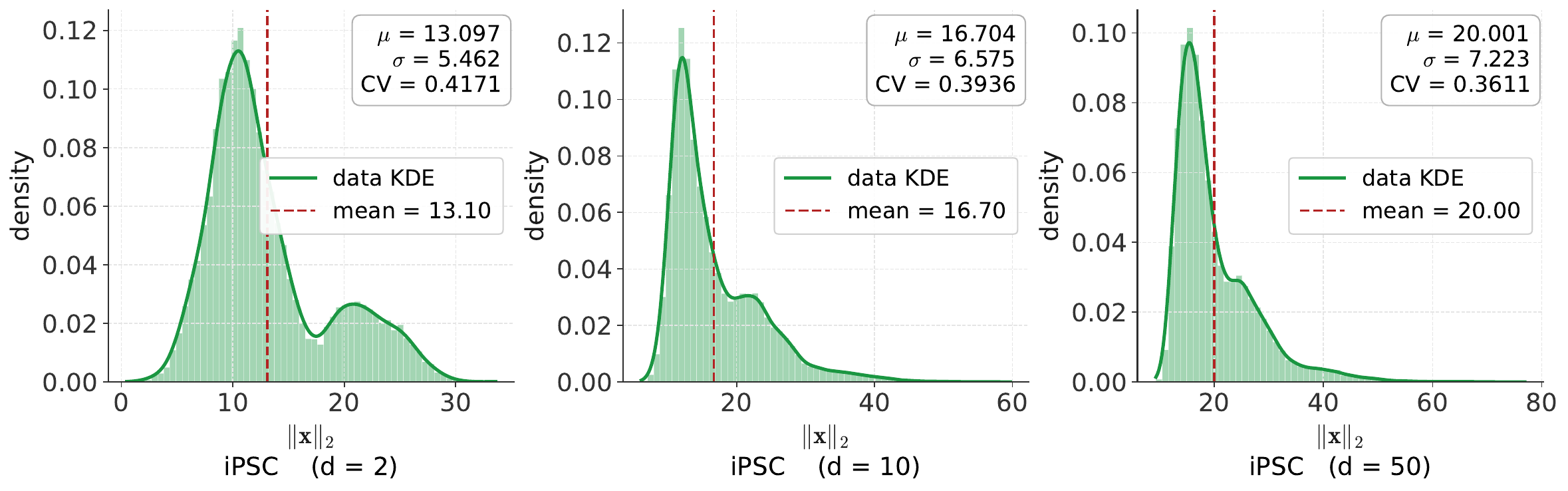}
    \caption{Norm-concentration diagnostics for iPSC-Liu~\citep{liu2020reprogramming} representations. Increasing PCA dimension reduces relative radial variation, consistent with the \(O(d^{-1/2})\) norm-concentration behavior described in Appendix~\ref{sec:app:concentration}.}
    \label{fig:app:ipsc}
\end{figure}

\section{Intuition behind alternating bridge and coupling optimization}
\label{app:alternating-intuition}

\paragraph{Why coupling matters for the bridge.}
The neural bridge is trained on endpoint pairs sampled from the current coupling. If the coupling is biologically implausible (for example, pairing an early stem cell with a terminally differentiated cell), the bridge must learn a path that traverses the entire developmental manifold in a single segment. Such a path is likely to pass through regions where the metric strongly penalizes normal motion, yielding high action and making it difficult for the network to find a low-action deformation. Conversely, when the coupling pairs cells at nearby pseudotime stages, the displacement is small and largely tangent-aligned, so the bridge can easily learn a geometry-consistent interpolant. In short, the coupling determines the training set for the bridge; wrong pairs give the network an impossible learning problem.

\paragraph{Why the bridge matters for the coupling.}
The coupling update solves an OT problem whose cost matrix is the path action evaluated on the neural bridge (Eq.~\eqref{eq:path-cost}). If the bridge is untrained, its paths are nearly straight lines and the cost reduces to Euclidean distance; the OT solver then cannot distinguish a biologically plausible pairing from an implausible one. Once the bridge has learned metric-aware paths, the action cost becomes geometry-sensitive. Pairs whose displacement aligns with the local developmental tangent incur low action, while pairs that would require motion across normal (non-developmental) directions incur high action. The bridge therefore transforms an uninformative Euclidean cost into a geometry-aware cost, allowing OT to select biologically meaningful pairings.

\paragraph{Why amortize with a shared neural network.}
Without amortization, every candidate endpoint pair would require solving an independent boundary-value problem (the geodesic equation) under the state-dependent metric $G_k$. With $N_k N_{k+1}$ candidate pairs, this is computationally infeasible. The neural bridge amortizes this cost by learning a single parametric family $\gamma_\theta(x,y,\tau)$ that approximates the minimum-action path for all pairs. The approximation need not be perfect at initialization; it only needs to be good enough to provide a more informative cost than Euclidean distance, and it improves as the coupling improves.

\section{Adaptive tangent and normal projectors}
\label{app:adaptive-projectors}

The weighted local covariance at each anchor point yields eigenvalues $\lambda_1 \le \lambda_2 \le \dots \le \lambda_d$ and orthonormal eigenvectors $v_1,\dots,v_d$.
The tangent subspace is spanned by the leading (maximum-variance) eigenvectors. Its effective dimension $q_r$ is chosen as the smallest integer such that the cumulative explained variance reaches a prescribed threshold $\tau \in (0,1)$ (e.g., $0.95$):
\[
q_r = \min\!\left\{q \;:\; \frac{\sum_{j=1}^{q} \lambda_{d-j+1}}{\sum_{j=1}^{d} \lambda_j} \ge \tau\right\}.
\]
The tangent projector is $P_T^{(r)} = \sum_{j=1}^{q_r} v_{d-j+1}\,v_{d-j+1}^\top$ and the normal projector is its orthogonal complement $P_N^{(r)} = I - P_T^{(r)}$.
In two dimensions this reduces to $P_N^{(r)} = n_r n_r^\top$ with $n_r = v_1$ and $q_r = 1$.

\section{Adaptive bandwidth estimation}
\label{app:bandwidth}

For each segment $k$, the spatial and temporal bandwidths $h_x^{(k)}$ and $h_t^{(k)}$ are estimated from the anchor normal bank rather than set by hand. The estimation aggregates information over a small window of anchor snapshots centered on segment $k$ (by default one snapshot on each side). The same local eigendecomposition also yields the tangent subspace $P_T^{(r)}$ (Appendix~\ref{app:adaptive-projectors}), which is used for bandwidth estimation below but does not enter the metric tensor directly.

\paragraph{Spatial geometry per snapshot.}
For each anchor snapshot $a$ in the window, PACE computes three quantities from the $k$ nearest neighbors of every cell (excluding the cell itself):

\begin{itemize}
    \item \textbf{Local spatial spacing} $\Delta_x^{(a)}$: the median tangential step size $\|\langle x_j - x_i,\, t_i \rangle\|$ between each cell $i$ and its neighbors $j$, where $t_i$ is the local tangent direction. This captures how densely cells are spaced along the developmental manifold.
    \item \textbf{Normal spatial rate} $\rho_{N,x}^{(a)}$: the median Frobenius-norm difference $\|P_N^{(j)} - P_N^{(i)}\|_F / \Delta_x^{(a)}$ between the normal projectors of neighboring cells. This measures how rapidly the local geometry changes in space.
    \item \textbf{Tangent spatial rate} $\rho_{T,x}^{(a)}$: the analogous quantity for tangent projectors.
\end{itemize}

\paragraph{Cross-snapshot temporal rates.}
For each consecutive pair $(a, a+1)$ in the window, PACE matches cells by nearest-neighbor correspondence and computes:

\begin{itemize}
    \item \textbf{Normal temporal rate} $\rho_{N,t}^{(a,a+1)}$: the Frobenius-norm difference $\|P_N^{(\text{nn}(i))} - P_N^{(i)}\|_F / |t_{a+1} - t_a|$ between matched cells, divided by the experimental time gap. This measures how rapidly the local geometry evolves over time.
    \item \textbf{Tangent temporal rate} $\rho_{T,t}^{(a,a+1)}$: the analogous quantity for tangent projectors.
\end{itemize}

\paragraph{Segment bandwidth assembly.}
All per-snapshot and cross-snapshot quantities are concatenated and their positive medians are taken, yielding four segment-level scalars: $\Delta_x^{(k)}$, $\rho_{N,x}^{(k)}$, $\rho_{N,t}^{(k)}$, and $\rho_{T,t}^{(k)}$. The metric bandwidths are then
\begin{align}
    h_x^{(k)} &= \max\!\left(\Delta_x^{(k)},\; \frac{1}{\rho_{N,x}^{(k)}}\right), \\
    c_N^{(k)} &= \frac{\rho_{N,t}^{(k)}}{\rho_{N,x}^{(k)}}, \\
    h_t^{(k)} &= \frac{h_x^{(k)}}{c_N^{(k)}}.
\end{align}

The spatial bandwidth $h_x$ is the larger of the local spacing and the reciprocal spatial variation scale, preventing the kernel from being narrower than either the point density or the geometry variation. The ratio $c_N^{(k)}$ converts spatial scale into equivalent temporal scale through the speed at which the normal geometry changes in time relative to space. A similar pair $(\sigma_x^{(k)}, \sigma_t^{(k)})$ is computed from the tangent rates for the coupling-refinement kernel, but the metric field uses $(h_x^{(k)}, h_t^{(k)})$.

\section{Limitations and Future Work}
\label{app:limitations}

PACE has several limitations. First, its local metric depends on nearest-neighbor covariance estimates within each snapshot. When data are sparse, noisy, or highly heterogeneous, the estimated tangent and normal subspaces may be unstable; adaptive neighborhood selection may improve robustness.

Second, PACE is optimized with stochastic neural training and periodic rematching. Although Appendix~\ref{app:optimization-monotonicity} states a monotonicity property for ideal exact block-coordinate updates, the implemented algorithm is only a scalable approximation and does not guarantee global optimality.

Third, destructive snapshot data do not provide ground-truth correspondences or continuous cell histories. The reported metrics evaluate held-out marginal reconstruction rather than individual trajectory correctness. PACE-inferred paths should be interpreted as plausible geometry-consistent reconstructions, not directly observed cellular histories.

Future work will extend PACE beyond closed-population transport by incorporating birth-death or growth terms~\citep{schiebinger2019optimal}. Another direction is to learn representations that preserve local trajectory geometry in high-dimensional settings, where distance concentration can make coupling costs less discriminative.

\end{document}